\documentclass{article}

\usepackage{arxiv}

\usepackage[utf8]{inputenc} 
\usepackage[T1]{fontenc}    
\usepackage{hyperref}       
\usepackage{url}            
\usepackage{booktabs}       
\usepackage{amsfonts}       
\usepackage{nicefrac}       
\usepackage{microtype}      
\usepackage{lipsum}
\usepackage{graphicx}
\graphicspath{ {./images/} }

\usepackage{amsmath}
\usepackage{amssymb}
\usepackage{amsfonts}
\usepackage{amstext}
\usepackage{amsthm}
\usepackage{setspace}
\usepackage{algorithm}
\usepackage{algorithmic}

\usepackage{xcolor}
\usepackage{cite}
\newtheorem{theorem}{Theorem}
\newtheorem{corollary}{Corollary}[theorem]

\usepackage[english]{babel}
\theoremstyle{definition}
\newtheorem{definition}{Definition}
\theoremstyle{example}
\newtheorem{example}{Example}

\title{On the Computation Rate of All-Reduce}

\author{Yufeng Zhou \\
        Department of Electrical Engineering \\
        University of North Texas, Denton TX \\
        \texttt{yufengzhou@my.unt.edu} \\
        \And
        Hua Sun \\
        Department of Electrical Engineering \\
        University of North Texas, Denton TX \\
        \texttt{hua.sun@unt.edu} \\
}

\begin{document}

\maketitle

\begin{abstract}
In the All-Reduce problem, each one of the $K$ nodes holds an input and wishes to compute the sum of all $K$ inputs through a communication network where each pair of nodes is connected by a parallel link with arbitrary bandwidth. The computation rate of All-Reduce is defined as the number of sum instances that can be computed over each network use. For the computation rate, we provide a cut-set upper bound and a linear programming lower bound based on time (bandwidth) sharing over all schemes that first perform Reduce (aggregating all inputs at one node) and then perform Broadcast (sending the sum from that node to all other nodes). Specializing the two general bounds gives us the optimal computation rate for a class of communication networks and the best-known rate bounds (where the upper bound is no more than twice of the lower bound) for cyclic, complete, and hypercube networks. 
\end{abstract}

\section{Introduction}
The training of modern massive scale learning models places a prohibitively high demand on the underlying multi-node communication networks to efficiently and reliably compute and propagate the desired functions and simultaneously achieve high speed and low latency. 
The involved communication for computation tasks are implemented by calling some basic building blocks referred to as collective communication primitives \cite{rabenseifner2004optimization, thakur2005optimization, sanders2019sequential, wang2020blink}, including All-Gather, All-Reduce, Broadcast, Reduce, and Reduce-Scatter. Among these primitives, All-Reduce is arguably the most important as it is frequently encountered and most complex, occupies the largest fraction of communication resource, and is often the bottleneck of the overall system performance \cite{patarasuk2009bandwidth, patarasuk2007bandwidth, kolmakov2020generalization, luczynski2024near, de2024swing}. The goal of this work is to study the ultimate communication efficiency of All-Reduce through a communication theory (computation rate) perspective.

In modern data centers, processing units are connected through wired parallel links. In this work, we consider a communication network where each pair of nodes is connected by a noiseless link with arbitrary bandwidth. This is a very general model, as any (one-hop) network topology can be obtained by setting certain links to have zero bandwidth. For the All-Reduce primitive, each node has an input and wishes to use the underlying communication network to compute the sum of the inputs from all nodes. The performance metric considered in this work is the computation rate, i.e., the ratio of the number of sum instances that can be computed over the number of times the network is used. That is, block coding is allowed where each input may contain multiple symbols and we may use the network multiple times, similar to the setting of the canonical communication rate metric considered in information theory \cite{Cover_Thomas}. All-Reduce has been mostly studied in computer science literature where different metrics are typically considered such as latency and total consumed bandwidth \cite{rabenseifner2004optimization, thakur2005optimization, sanders2019sequential, wang2020blink, patarasuk2009bandwidth, patarasuk2007bandwidth, kolmakov2020generalization, luczynski2024near, de2024swing}; while we also note that the communication primitives have started to gain attention in the communication theory community, e.g., another variant of All-Reduce is studied in \cite{li2024optimal}; All-Gather is studied in \cite{huang2025optimality}.

We next describe the main results obtained. First, we present a general upper bound and a general lower bound on the computation rate that applies to any network, i.e., any bandwidth of the links. The upper bound is based on a cut-set argument that translates the All-Reduce computation problem to a point-to-point communication problem. The cut-set argument is standard in network coding for computing literature \cite{appuswamy2011network, guang2019improved}. Simple as it seems, we are not able to improve the cut-set bound which is left as an open problem (refer to Section \ref{sec:dis} for more discussions). The lower bound is based on time sharing over all possible first-Reduce-and-then-Broadcast schemes, i.e., first computing the sum of all inputs at one node by aggregating the inputs through a sub-network that forms a spanning tree topology and then broadcasting the sum from that node to all other nodes through another spanning tree. By considering all possible spanning trees for Reduce and Broadcast, we fully exploit the potential of this class of Reduce-and-Broadcast schemes available in the communication network and the weight of each sub-scheme is optimized through a linear program to maximize the computation rate within the bandwidth constraints. While this scheme is conceptually simple, we are not able to improve the linear programming lower bound which is also left as an open problem (refer to Section \ref{sec:dis} for more discussions).

Second, we apply the cut-set upper bound and the linear programming lower bound to some specific networks with practical topology to obtain more concrete results; note that both the upper bound and the lower bound involve optimizations so that the bounds are not trivial to evaluate (especially the lower bound that needs to consider all possible spanning trees whose number is exponential). In this regard, we find a class of networks where the two bounds match, i.e., the maximum computation rate is characterized. This class of network is intimately related to the spanning tree packing scheme of Reduce-and-Broadcast and is essentially some linear combination of tree topologies (so trees are included as a special case, refer to Section \ref{sec:optimal} for details). In addition, we explicitly evaluate the upper and lower bounds for some commonly encountered network topologies (and often impose that the connected links have the same bandwidth), including cycles, complete graphs, and hypercubes. The obtained bounds are the tightest in the literature, to the best of our knowledge; for the upper bound, we are not aware of any explicit existing bounds (while the cut-set argument is well-known as noted above); for the lower bound, we have applied existing algorithms in the literature (although they are not originally proposed to maximize computation rate) and find that the rate achieved is no larger than our lower bound (some contains similar ideas on spanning tree packing \cite{wang2020blink}; refer to Section \ref{sec:ach} for details). For each network, the upper bound is no more than twice of the lower bound, so the maximum computation rate is characterized to within a multiplicative gap of $2$.

\section{System Model}
Consider $K \geq 2$ nodes where node $i$ holds an input $W_i$, $i \in \{1, 2, \cdots , K\} \triangleq [K]$. 
The inputs are independent and each input consists of $L$ i.i.d. uniform symbols from $\mathbb{F}_q$, i.e., $W_i \in \mathbb{F}_q^{L \times 1}$.
\begin{eqnarray}
&& H(W_1)=H(W_2)= \cdots =H(W_K)=L \notag \\
&& H(W_1,W_2, \cdots, W_K)=H(W_1)+H(W_2)+ \cdots +H(W_K) \label{h12}
\end{eqnarray}
where the entropy in this work is measured in $q$-ary units.

Node $i$ is connected to node $j$ with a noiseless link that can carry $\beta_{ij}$ symbols 
from $\mathbb{F}_q, i, j \in [K], i \neq j$ where $\beta_{ij}$ is assumed to be an integer with no loss of generality. We label the communication network $\mathcal{N}$ through the bandwidth of all links, i.e., $\mathcal{N}(\vec{\beta}) \triangleq \mathcal{N}(\beta_{12}; \cdots; \beta_{(K-1)K})$. Each node wishes to compute $W_1 + \cdots + W_K \in \mathbb{F}_q^{L \times 1}$ by using the network $\mathcal{N}(\vec{\beta})$ $N$ times. Over the $n$-th network use, $n \in [N]$, node $i$ sends a message $X_{ij}(n) \in \mathbb{F}_q^{\beta_{ij} \times 1}$ to node $j$ and as the links are parallel, node $j$ receives the following messages
\begin{eqnarray}
Y_j(n)=[X_{1j}(n), X_{2j}(n), \cdots, X_{ij}(n), \cdots, X_{Kj}(n)]. \label{rx}
\end{eqnarray}
$X_{ij}(n)$ is required to be a function of the input $W_i$ and the messages previously received by node $i$,
\begin{eqnarray}
H(X_{i1}(n), \cdots, 
X_{iK}(n) | W_i, Y_i(1:n-1))=0 \label{det}
\end{eqnarray}
where $Y_i(1:n-1) \triangleq (Y_i(1), Y_i(2), \dots, Y_i(n-1) )$.

At the end of the $N$-th network use, each node may recover $W_1 + \cdots + W_K$ with no error from all available information.
\begin{eqnarray}
H(W_1+\cdots+W_K | W_i, Y_i(1:N)) = 0. \label{dec}
\end{eqnarray}


The computation rate of All-Reduce characterizes the number of sum instances computed for each network use and is defined as
\begin{eqnarray}
R=\frac{L}{N}. \label{rate}
\end{eqnarray}
We are interested in characterizing the maximum computation rate, defined as $R^{*} = \sup_L R$.



\section{Results}

\subsection{Cut-Set Upper Bound}



A upper (converse) bound based on cut-set arguments of the computation rate is given in the following theorem. For $\mathcal{K} \subset [K]$, $\mathcal{K}^c$ denotes the complement set, i.e., the set of elements that belong to $[K]$ but not to $\mathcal{K}$.

\begin{theorem}\label{thm:cut}
For network $\mathcal{N}(\vec{\beta})$, 
the All-Reduce computation rate satisfies
\begin{eqnarray}
R^* \leq \overline{R} ~\triangleq~ \min_{\mathcal{K} \subsetneq [K], \mathcal{K} \neq \emptyset} \sum_{i \in \mathcal{K}, j \in \mathcal{K}^c} \beta_{ij}. \label{eq:cut}
\end{eqnarray}
\end{theorem}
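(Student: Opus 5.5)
Fix an arbitrary nonempty proper subset $\mathcal{K} \subsetneq [K]$, and pick any node $j_0 \in \mathcal{K}^c$. The plan is to show that everything node $j_0$ needs to know about the inputs in $\mathcal{K}$ must cross the cut from $\mathcal{K}$ to $\mathcal{K}^c$. Since that cut carries at most $\sum_{i\in\mathcal{K}, j\in\mathcal{K}^c}\beta_{ij}$ symbols per network use, this gives $L \le N \sum_{i\in\mathcal{K}, j\in\mathcal{K}^c}\beta_{ij}$. Minimizing over $\mathcal{K}$ then yields \eqref{eq:cut}.

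\textbf{Step 1 (merge each side of the cut into a super-node).}
Define $W_{\mathcal{K}^c} \triangleq (W_j)_{j\in\mathcal{K}^c}$ and let $\mathcal{Y}$ be the collection of all cross-cut messages $X_{ij}(n)$ with $i\in\mathcal{K}$, $j\in\mathcal{K}^c$, $n\in[N]$.

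By induction on $n$, using \eqref{det} and \eqref{rx}, all messages sent by nodes in $\mathcal{K}^c$ are deterministic functions of $W_{\mathcal{K}^c}$ together with the cross-cut messages of earlier rounds. The reason is that the only information entering $\mathcal{K}^c$ from outside arrives over links from $\mathcal{K}$. Consequently $Y_{j_0}(1{:}N)$ is a function of $(W_{\mathcal{K}^c}, \mathcal{Y})$.

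Combining this with \eqref{dec} at $j_0$ gives
\[
H\Big(\textstyle\sum_{k\in[K]} W_k \,\Big|\, W_{\mathcal{K}^c}, \mathcal{Y}\Big)=0,
\]
and hence, since $W_{\mathcal{K}^c}$ is conditioned on,
\[
H\Big(\textstyle\sum_{i\in\mathcal{K}} W_i \,\Big|\, W_{\mathcal{K}^c}, \mathcal{Y}\Big)=0.
\]

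\textbf{Step 2 (entropy chain).}
The partial sum $\sum_{i\in\mathcal{K}}W_i$ is uniform on $\mathbb{F}_q^{L}$ because $\mathcal{K}\neq\emptyset$. It is also independent of $W_{\mathcal{K}^c}$ by \eqref{h12}. Therefore
\begin{align*}
L &= H\Big(\textstyle\sum_{i\in\mathcal{K}}W_i \,\Big|\, W_{\mathcal{K}^c}\Big) \\
&= I\Big(\textstyle\sum_{i\in\mathcal{K}}W_i ; \mathcal{Y} \,\Big|\, W_{\mathcal{K}^c}\Big) \\
&\le H(\mathcal{Y}) \\
&\le \textstyle\sum_{n=1}^N \sum_{i\in\mathcal{K}, j\in\mathcal{K}^c} H(X_{ij}(n)) \\
&\le N\textstyle\sum_{i\in\mathcal{K}, j\in\mathcal{K}^c}\beta_{ij}.
\end{align*}
The last inequality uses $X_{ij}(n)\in\mathbb{F}_q^{\beta_{ij}}$. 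Dividing by $N$ bounds $R=L/N$ for every $L$, so the supremum $R^*$ obeys the same bound. Taking the minimum over $\mathcal{K}$ completes the argument.

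\textbf{Main obstacle.}
The only delicate point is the inductive claim in Step 1. The messages $X_{ij}(n)$ for $i\in\mathcal{K}$ may themselves depend on feedback that nodes in $\mathcal{K}$ received from $\mathcal{K}^c$. Such interaction does not break the argument, because we never need $\mathcal{Y}$ to be independent of anything. We only need that $\mathcal{K}^c$'s entire view is generated by $W_{\mathcal{K}^c}$ and $\mathcal{Y}$. I would establish this round by round: assume all of $Y_j(1{:}n-1)$, $j\in\mathcal{K}^c$, are functions of $(W_{\mathcal{K}^c}, \mathcal{Y})$. Then by \eqref{det} so are the messages $\mathcal{K}^c$ sends at round $n$. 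Each $Y_j(n)$ consists of these together with round-$n$ cross-cut messages, which are in $\mathcal{Y}$, so the claim holds at round $n$ as well.
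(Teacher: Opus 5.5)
Your proof is correct and is essentially the paper's cut-set argument: merge each side of the cut, give the destination side a genie, and bound the resulting mutual information by the entropy of the cross-cut messages, which is at most $N\sum_{i\in\mathcal{K},j\in\mathcal{K}^c}\beta_{ij}$. The differences are cosmetic. The paper gives the super destination every input except a single $W_s$ with $s\in\mathcal{K}$, whereas you condition only on $W_{\mathcal{K}^c}$ and track the partial sum $\sum_{i\in\mathcal{K}}W_i$; you also spell out by induction the step that messages sent inside $\mathcal{K}^c$ are functions of $W_{\mathcal{K}^c}$ and the cross-cut traffic, which the paper asserts in one line.
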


Intuitively, upper bound (\ref{eq:cut}) says that the computation rate cannot exceed the information flow from nodes in $\mathcal{K}$ to their complement. It follows from a cut-set argument because if all links from nodes in $\mathcal{K}$ to their complement are cut out, All-Reduce cannot be completed. An entropy based proof is presented next.

{\it Proof:} Consider any non-empty $\mathcal{K}$ and any $s \in \mathcal{K}$. 
Allow full cooperation among nodes $\mathcal{K}$ and name them as the super source, i.e., the super source knows $(W_i)_{i \in \mathcal{K}}$ (in particular, $W_s$ included). Allow full cooperation among nodes $\mathcal{K}^c$ and name them as the super destination. Give all inputs except $W_s$ to the super destination, i.e., the super destination knows $(W_k)_{k \in [K], k \neq s}$. Allowing cooperation and making $W_k$ available cannot hurt the rate so that it suffices to upper bound the computation rate of the two-node (super source-destination) network as follows\footnote{We have now transformed the All-Reduce computation problem to a two-node communication problem where the super source wishes to send $W_s$ to the super destination and may directly apply the well-known cut-set bound \cite{Cover_Thomas}. We give a full proof here for completeness.}. Consider any feasible All-Reduce protocol. 
\begin{eqnarray}
L &\overset{(\ref{h12})}{=}& H(W_1 + \cdots + W_K) \label{e1} \\ 
&\overset{(\ref{dec})}{=}& I\left(W_1 + \cdots + W_K; \big(Y_j(1:N)\big)_{j \in \mathcal{K}^c},  (W_k)_{k \in [K], k \neq s}\right) \label{e2}  \\
&\overset{(\ref{h12})}{=}& I(W_1 + \cdots + W_K; (Y_j(1:N))_{j \in \mathcal{K}^c} | (W_k)_{k \in [K], k \neq s}) \label{e3}  \\
&\overset{(\ref{rx})}{=}& I(W_1 + \cdots + W_K; (X_{tj}(1:N))_{t \in [K], j \in \mathcal{K}^c} | (W_k)_{k \in [K], k \neq s}) \label{e4}  \\
&\overset{(\ref{det})}{=}& I(W_1 + \cdots + W_K; (X_{ij}(1:N))_{i \in \mathcal{K}, j \in \mathcal{K}^c} | (W_k)_{k \in [K], k \neq s}) \label{e5}  \\
&\leq&  H((X_{ij}(1:N))_{i \in \mathcal{K}, j \in \mathcal{K}^c}) \label{e6}  \\
&\leq& N \sum_{i \in \mathcal{K}, j \in \mathcal{K}^c} \beta_{ij} \label{e7} \\
\Rightarrow R &\overset{(\ref{rate})}{=}& L/N ~\leq~ \sum_{i \in \mathcal{K}, j \in \mathcal{K}^c} \beta_{ij} \label{e8}
\end{eqnarray}
where (\ref{e1}) follows from the fact that $W_1+\cdots+W_K$ contains $L$ uniform $\mathbb{F}_q$ symbols as the inputs are independent and uniform. (\ref{e2}) is due to the observation that the super destination may recover $W_1 + \cdots + W_K$ with no error. (\ref{e3}) follows from the independence of the inputs. In (\ref{e4}), we have plugged in the received messages and (\ref{e5}) is obtained because the messages from nodes inside the super destination is a function of other messages and inputs known to the super destination. (\ref{e7}) follows from the size of the messages and the property that uniform distribution maximizes entropy. Minimizing the right-hand-side of (\ref{e8}) over all $\mathcal{K}$ and taking supremum on both sides over $L$ gives the desired upper bound (\ref{eq:cut}).
\hfill\qed

\subsection{Linear Programming Lower Bound}

We give an achievable scheme (a lower bound) for the All-Reduce problem by time sharing over all schemes that first perform Reduce and then perform Broadcast. Reduce refers to a scheme where all nodes aggregate their inputs at one node (called the root) through a directed tree, called rooted MAC tree network defined next.


\begin{definition}[Rooted MAC Tree Network]
A network $\mathcal{N}_{T_M} (\vec{\beta}_{T_M})$ is called a rooted MAC tree network 
if the edges corresponding to the non-zero elements of the bandwidth vector $\vec{\beta}_{T_M}$ form a directed (spanning) tree topology. Specifically,
for any $i \in [K], i \neq r$, there exists $i^* \in [K]$ such that\footnote{The $K-1$ edges $i \rightarrow i^*$ form a spanning tree because there exists a directed path from any node $i \neq r$ to $r$ (by going through the edges $i \rightarrow i^*$ recursively).} $\beta_{i i^*} = 1$ and $\beta_{i j} = 0$ for all $j \in [K], j \neq i^*$. 

Node $r$ has no outgoing edges and is referred to as the root node. To make the root node $r$ explicit, we will denote a rooted MAC tree network by $\mathcal{N}_{T_M} (\vec{\beta}_{T_M}, r)$. Note that $\mathcal{N}_{T_M} (\vec{\beta}_{T_M}, r)$ has $K-1$ edges as each node except the root has one outgoing edge, i.e., $\vec{\beta}_{T_M}$ contains $K-1$ ones.
\end{definition}


Given a rooted MAC tree network $\mathcal{N}_{T_M} (\vec{\beta}_{T_M}, r)$, we may perform Reduce by using the network $K-1$ times\footnote{Often we need strictly less than $K-1$ times slots, but from a computation rate perspective there is no rate loss so no further improvements are pursued in this work.} (aggregating the inputs from the bottom of the tree to the root, one edge at a time) so that in the end, root node $r$ knows $\sum_{k=1}^K W_k$ where each $W_k$ contains $1$ symbol. With $K$ nodes, there are $K^{K-1}$ distinct rooted MAC tree networks because $K$ nodes may form $K^{K-2}$ different undirected spanning trees (a canonical result by Cayley, see Theorem 2.1 of \cite{van2001course}) and any one of the $K$ nodes may be the root so as to result in a distinct $\vec{\beta}_{T_M}$. Fig.~\ref{fig:mac} contains an example for $K=3$.





\begin{figure}[hbt!]
\centering
\includegraphics[scale=0.6]{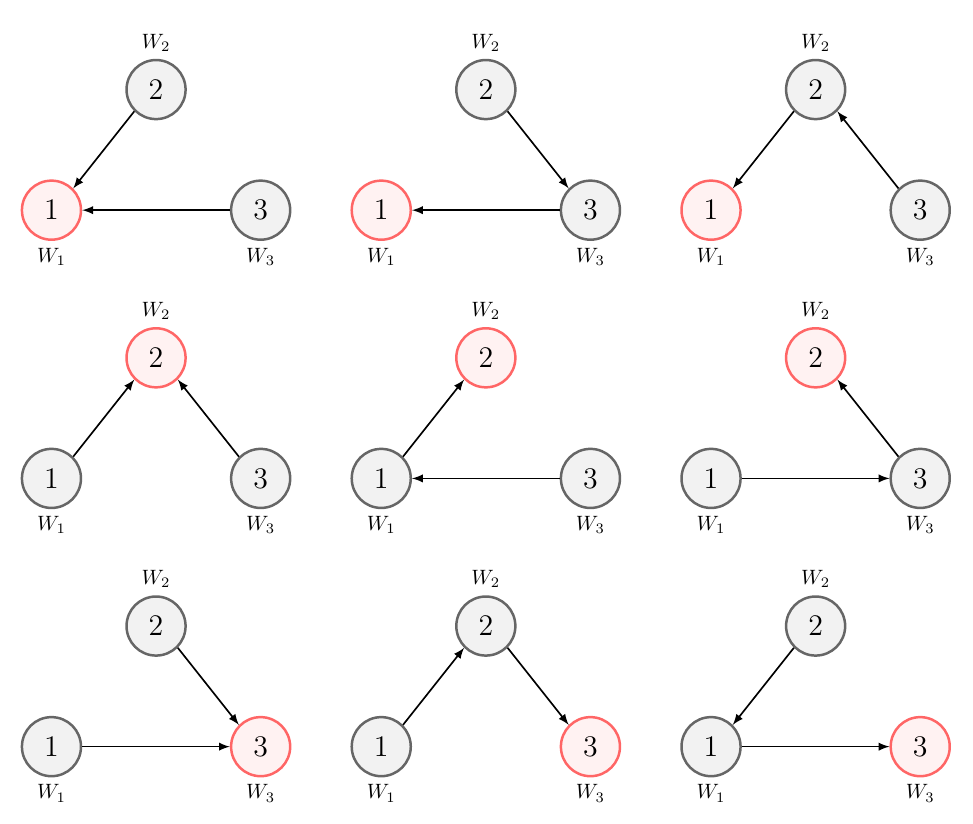}
\caption{All $3^2 = 9$ rooted MAC tree networks  $\mathcal{N}_{T_M} (\vec{\beta}_{T_M}, r)$ with $K = 3$ nodes and the root node is colored in red. For each network, the root node may compute $W_1+W_2+W_3$ by using the network $2$ times.}
\label{fig:mac}
\end{figure}

Next, we proceed to the Broadcast operation where the sum is sent from the root node to all other nodes through another directed tree, called rooted BC tree network defined next.

\begin{definition}[Rooted BC Tree Network]
A network $\mathcal{N}_{T_B} (\vec{\beta}_{T_B}, r)$ is called a rooted BC tree network if the non-zero bandwidth edges form a directed tree leaving the root node $r \in [K]$, i.e., 
for any $j \in [K], j \neq r$, there exists $j^* \in [K]$ such that $\beta_{j^* j} = 1$ and $\beta_{i j} = 0$ for all $i \in [K], i \neq j^*$. 
\end{definition}

Rooted BC tree networks are dual to rooted MAC tree networks in the sense that reversing the direction of the edges of a rooted BC tree network $\mathcal{N}_{T_B} (\vec{\beta}_{T_B}, r)$ gives a rooted MAC tree network $\mathcal{N}_{T_M} (\vec{\beta}_{T_M}, r)$, i.e., $\beta_{ij} = 1$ in $\vec{\beta}_{T_M}$ if and only if $\beta_{ji} = 1$ in $\vec{\beta}_{T_B}$. As a result of this dual property, we know that given a rooted BC tree network $\mathcal{N}_{T_B} (\vec{\beta}_{T_B}, r)$, we may perform Broadcast by using the network $K-1$ times (broadcasting the sum from the root to all other nodes, one edge at a time) so that in the end, every node knows $\sum_{k=1}^K W_k$. Similarly with $K$ nodes, there are $K^{K-1}$ distinct rooted BC tree networks (see Fig.~\ref{fig:bc} for an example with $K=3$). 

\begin{figure}[hbt!]
\centering
\includegraphics[scale=0.6]{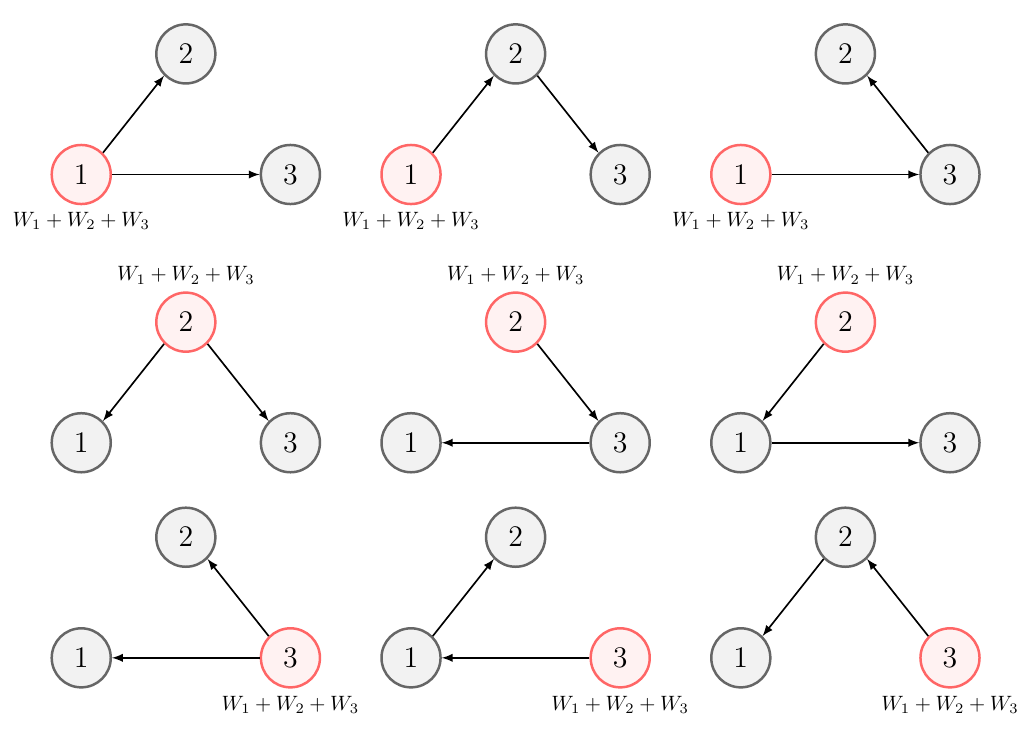}
\caption{All $3^2 = 9$ rooted BC tree networks  $\mathcal{N}_{T_B} (\vec{\beta}_{T_B}, r)$ with $K = 3$ nodes and the root node is colored in red. For each network, the root node may propagate $W_1+W_2+W_3$ to all other nodes by using the network $2$ times.}
\label{fig:bc}
\end{figure}

Equipped with rooted MAC and BC tree networks, we are ready to combine them into a rooted MAC-BC network, defined next and over which we may achieve All-Reduce computation rate $R=1$.

\begin{definition}[Rooted MAC-BC Network]
Given a rooted MAC tree network $\mathcal{N}_{T_M} (\vec{\beta}_{T_B}, r)$ and a rooted BC tree network $\mathcal{N}_{T_B} (\vec{\beta}_{T_B}, r)$ with the same root $r$ (and the same number of nodes $K$), the network $\mathcal{N}_{T_{MB}} (\vec{\beta}_{T_{MB}}, r)$ where $\vec{\beta}_{T_{MB}} = \vec{\beta}_{T_{M}} + \vec{\beta}_{T_{B}}$ is defined as a rooted MAC-BC network.
\end{definition}

Following the results of rooted MAC and BC tree networks, we know that a rooted MAC-BC network $\mathcal{N}_{T_{MB}} (\vec{\beta}_{T_{MB}}, r)$ may perform All-Reduce for $L=1$ symbol with $N = 2(K-1)$ network uses; further, we may repeat this scheme in a streamline manner to perform $L$ instances of All-Reduce with $N = 2(K-1) + L-1$ network uses for any $L$, so the achieved computation rate $R = L/N = L/(L+2K-3)$ approaches $1$ when $L$ approaches infinity.
There are $K \times (K^{K-2})^2 = K^{2K-3} \triangleq Z$ distinct rooted MAC-BC networks (see Fig.~\ref{fig:macbc} for an example with $K=3$).





\begin{figure}[h]
\centering
\includegraphics[scale=0.6]{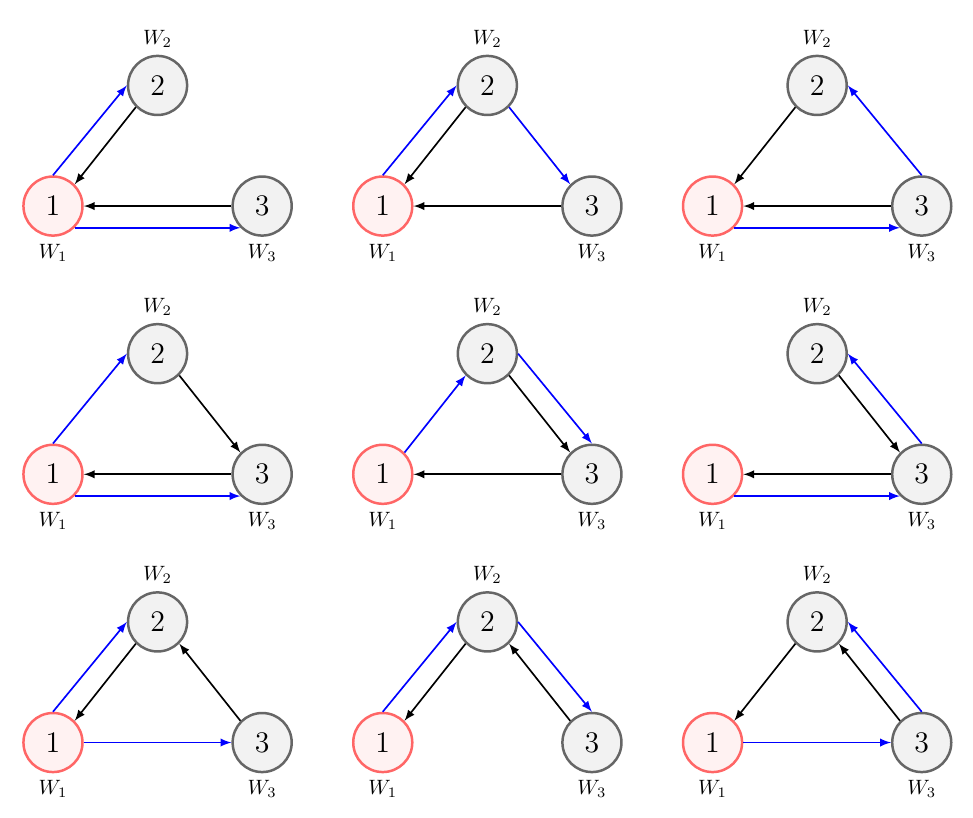}
\caption{All $9$ rooted MAC-BC tree networks $\mathcal{N}_{T_{MB}} (\vec{\beta}_{T_{MB}}, r=1)$ with $K = 3$ nodes and root $r=1$ (colored in red). Each network may perform $L$ instances of sum computation with $N=L+3$ network uses where Reduce edges are in black and Broadcast edges are in blue, achieving rate $R = L/N \rightarrow 1$ as $L \rightarrow \infty$.}
\label{fig:macbc}
\end{figure}

Finally, we jointly consider all possible rooted MAC-BC networks and maximize the computation rate through optimizing the weights of each MAC-BC network (that is able to achieve All-Reduce computation rate $1$) with a linear program that captures the bandwidth constraint for each link.

Denote the $z$-th rooted MAC-BC network by $\mathcal{N}_{T_{MB}}^z (\vec{\beta}_{T_{MB}}^z, r^z), z \in [Z]$ and suppose it is employed $\lambda_z$ times (associated with weight $\lambda_z$, i.e., occupies $\lambda_z$ times of bandwidth of that of the rooted MAC-BC network) over an arbitrary network $\mathcal{N}(\vec{\beta})$. Then to not exceed the bandwidth constraint, we have
\begin{eqnarray}
\lambda_1 \vec{\beta}_{T_{MB}}^1 + \cdots + \lambda_Z \vec{\beta}_{T_{MB}}^Z \leq \vec{\beta}
\end{eqnarray}
and the computation rate achieved is $\lambda_1 + \cdots + \lambda_Z$. Therefore the maximum computation rate that can be achieved using this Reduce-and-Broadcast scheme can be expressed as the optimal solution to a linear program and this result is stated in the following theorem.
\begin{theorem}\label{thm:lp}
For network $\mathcal{N}(\vec{\beta})$, the All-Reduce computation rate satisfies
\begin{eqnarray}
&& R^* \geq \underline{R} ~\triangleq~ \max \lambda_1 + \cdots + \lambda_Z \notag \\
&& ~~~~~~~~~~~~~~ s.t.~~ \lambda_1 \vec{\beta}_{T_{MB}}^1 + \cdots + \lambda_Z \vec{\beta}_{T_{MB}}^Z \leq \vec{\beta} \label{eq:lp} \\
&& ~~~~~~~~~~~~~~~~~~~~~ \lambda_1, \cdots, \lambda_Z \in \mathbb{R}_+. \notag
\end{eqnarray}
\end{theorem}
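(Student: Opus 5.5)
To prove Theorem \ref{thm:lp}, I will show that every feasible point of the linear program (\ref{eq:lp}) can be approached by an actual All-Reduce protocol. The protocol is built by time (bandwidth) sharing the rooted MAC-BC schemes over $\mathcal{N}(\vec{\beta})$.

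Because the constraint matrix and the vector $\vec{\beta}$ are integral, the LP optimum $\underline{R}$ is attained at a rational vertex. Any feasible $\lambda$ with rational entries can also be approached from below by feasible rational vectors, since the constraints are monotone in $\lambda \ge 0$. So it suffices to fix a rational feasible $(\lambda_1,\dots,\lambda_Z)$, write $\lambda_z = m_z/D$ with a common denominator $D$ and integers $m_z \ge 0$, and show that rate $\sum_z \lambda_z - \epsilon$ is achievable for every $\epsilon>0$.

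The key step is a symbol-level splitting of the network. The bandwidth condition $\sum_z m_z \vec{\beta}_{T_{MB}}^z \le D\vec{\beta}$ says that using $\mathcal{N}(\vec{\beta})$ a total of $D$ times in a block gives each link $i\to j$ exactly $D\beta_{ij}$ symbols. These symbols are enough to give $m_z$ dedicated symbols on every edge of the $z$-th MAC-BC network. For $\vec{\beta}_{T_{MB}}^z$ the entry for an edge may be $2$ when the MAC and BC trees share a directed edge, and the count handles this automatically.

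I then run, for each $z$ and each of its $m_z$ copies, an independent instance of the streamlined Reduce-then-Broadcast scheme described before the theorem, on its dedicated sub-links. Each instance works on its own symbol slice of the inputs, so I split $W_k$ into $\sum_z m_z$ disjoint sub-blocks.

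Over $N'$ super-uses (that is, $N=DN'$ network uses), each instance computes $L' = N' - (2K-3)$ symbols of the sum. The total is $L=(\sum_z m_z)L'$, which gives
\[
R=\frac{L}{N}=\frac{\sum_z m_z}{D}\cdot\frac{N'-2K+3}{N'} \to \sum_z \lambda_z .
\]

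Two validity checks remain. First, the causality constraint (\ref{det}) holds because each instance's transmissions depend only on the node's own input slice and on previously received symbols of that instance. Second, the decoding requirement (\ref{dec}) holds because every node recovers every slice of $W_1+\cdots+W_K$.

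The main obstacle is the bookkeeping in this embedding. The LP weights are real while the bandwidths and the number of network uses are integers, so I have to argue that grouping $D$ network uses into a super-use legitimately realizes fractional bandwidth. I also have to check that the fixed pipeline start-up cost of $2K-3$ uses is negligible, and that taking the supremum over $L$ in $R^*$ yields $R^* \ge \underline{R}$.
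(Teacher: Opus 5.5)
Your proposal is correct and follows the paper's own argument: you bandwidth-share over the rooted MAC-BC networks, each of which achieves rate $1$ with the pipelined Reduce-then-Broadcast scheme, and you restrict to a rational optimal LP solution because the data are integral. You also make explicit the bookkeeping the paper leaves implicit, and it is all sound: the common denominator $D$ with super-uses of $D$ network uses, link entries equal to $2$ when the MAC and BC trees share a directed edge, and the $2K-3$ start-up overhead that vanishes in the rate.
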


{\it Proof:} The proof is almost immediate. For each rooted MAC-BC network, we may achieve computation rate of $1$, then by time (bandwidth) sharing over all $Z$ rooted MAC-BC networks with weights $\lambda_1, \cdots, \lambda_Z$, the achieved rate becomes $\lambda_1 + \cdots + \lambda_Z$ subject to the constraint that each weight $\lambda_z$ is non-negative (the solution to the linear program (\ref{eq:lp}) must be rational as the bandwidth vectors have integer elements) and the bandwidth of each link can support the communication flow.

\hfill\qed

Solving the optimal solution of the linear program (\ref{eq:lp}) is computationally heavy (infeasible in general) because $Z = K^{2K-3}$ is prohibitively large and finding all $\vec{\beta}^z_{T_{MB}}$ explicitly is highly non-trivial. Fortunately, we have an explicit simple upper bound to $\underline{R}$ (that only depends on $\vec{\beta}$) so that whenever the upper bound is achieved (usually by using far less than $Z$ rooted MAC-BC networks), we know the optimal solution to linear program (\ref{eq:lp}) (this upper bound will be used frequently in the following sections). This result in stated in the following corollary.


\begin{corollary}\label{cor}
For any $\vec{\beta} = (\beta_{12}; \cdots; \beta_{(K-1) K})$, 
\begin{eqnarray}
\underline{R} \leq \frac{ \beta_{12} + \cdots + \beta_{(K-1)K} }{2(K-1)}. \label{eq:lpbound}
\end{eqnarray}
\end{corollary}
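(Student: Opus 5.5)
The plan is to sum all the bandwidth constraints of the linear program (\ref{eq:lp}) into one scalar inequality and use the fact that every rooted MAC-BC network has exactly the same total bandwidth.

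First, count the total bandwidth of a single rooted MAC-BC network. A rooted MAC tree network $\mathcal{N}_{T_M}(\vec{\beta}_{T_M}, r)$ has exactly $K-1$ unit entries in $\vec{\beta}_{T_M}$, one outgoing edge for each non-root node. A rooted BC tree network $\mathcal{N}_{T_B}(\vec{\beta}_{T_B}, r)$ likewise has exactly $K-1$ unit entries, one incoming edge for each non-root node. Since $\vec{\beta}^z_{T_{MB}} = \vec{\beta}_{T_M} + \vec{\beta}_{T_B}$, the sum of all entries of $\vec{\beta}^z_{T_{MB}}$ is exactly $2(K-1)$ for every $z \in [Z]$. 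This holds even when the Reduce and Broadcast trees share a directed edge, in which case that entry equals $2$; the entry sum is still $2(K-1)$. This is the only point that needs care, since the bound relies on additivity of entries rather than on counting distinct edges.

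Second, take any feasible $(\lambda_1,\cdots,\lambda_Z)$ of (\ref{eq:lp}). The vector constraint $\sum_z \lambda_z \vec{\beta}^z_{T_{MB}} \leq \vec{\beta}$ holds componentwise over the ordered pairs $(i,j)$, and all quantities involved are nonnegative. Summing over all components and exchanging the order of summation gives
\begin{eqnarray*}
2(K-1)\sum_{z=1}^Z \lambda_z ~=~ \sum_{z=1}^Z \lambda_z \sum_{i\neq j} \big(\vec{\beta}^z_{T_{MB}}\big)_{ij} ~\leq~ \sum_{i \neq j} \beta_{ij}.
\end{eqnarray*}

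Third, divide both sides by $2(K-1)>0$, which is valid because $K\geq 2$. This bounds the objective value of every feasible point, and therefore bounds the maximum $\underline{R}$, giving (\ref{eq:lpbound}). Here $\beta_{12}+\cdots+\beta_{(K-1)K}$ is read as the sum over all entries of $\vec{\beta}$.

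No real obstacle is expected. The argument only requires the edge-count bookkeeping in the first step, together with the observation that summing the constraints is a valid relaxation, i.e., a weak-duality argument with the all-ones dual vector scaled by $1/(2(K-1))$.
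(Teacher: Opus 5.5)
Your proposal is correct and follows essentially the same argument as the paper. Both observe that every $\vec{\beta}^z_{T_{MB}}$ has entry sum $2(K-1)$, then sum all rows of the constraint in (\ref{eq:lp}) and divide by $2(K-1)$. Your remark that an entry may equal $2$ when the Reduce and Broadcast trees share a directed edge, without changing the entry sum, is a sound point of care that the paper leaves implicit.
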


{\it Proof:} The key observation is that the sum of all elements in $\vec{\beta}^z_{T_{MB}}$ is the constant $2(K-1)$ for any $z \in [Z]$ because each rooted MAC and BC tree network has exactly $K-1$ unit bandwidth links (edges). Then taking the sum of all rows of (\ref{eq:lp}) gives $(\lambda_1 + \cdots + \lambda_Z) \times 2(K-1) \leq \beta_{12} + \cdots + \beta_{(K-1) K}$ and we have the desired bound (\ref{eq:lpbound}).

\hfill\qed


\subsection{A Class of Networks where $R^* = \overline{R} = \underline{R}$}\label{sec:optimal}

The starting point of this section is the observation that for any rooted MAC-BC network $\mathcal{N}_{T_{MB}}(\vec{\beta}_{T_{MB}})$ (where the root is not shown as it is not important here), the maximum computation rate is $R^* = \underline{R} = \overline{R} = 1$. $\underline{R} = 1$ as Reduce-and-Broadcast can achieve rate $1$. $\overline{R} = 1$ can be seen as follows. $\mathcal{N}_{T_{MB}}(\vec{\beta}_{T_{MB}})$ contains exactly $2(K-1)$ edges (unit bandwidth links where $\beta_{ij} = 1$) so that on average the in-degree (out-degree) of each node is $\frac{2(K-1)}{K} = 2 - 1/K$. Then by the pigeonhole principle, there must exist a node $k \in [K]$ whose in-degree (out-degree) is $1$, i.e., there exists $k^* \neq k, k^* \in [K]$ so that $\beta_{k^*k} = 1$ while $\beta_{i k} = 0$ for all $i \neq k^*$ ($\beta_{kk^*} = 1$ while $\beta_{k j} = 0$ for all $j \neq k^*$). Then setting $\mathcal{K} = [K] \setminus \{i\}$ ($\mathcal{K} = \{i\}$) in (\ref{eq:cut}) gives $\overline{R} =  1$. As the edge $k^* \rightarrow k$ ($k \rightarrow k^*$) forms a cut (and is critical), we call it a {\em cut-edge} and include it in the notation of a rooted MAC-BC network, i.e., $\mathcal{N}_{T_{MB}}(\vec{\beta}_{T_{MB}}, i^* \rightarrow j^*)$ where $i^* \rightarrow j^*$ is a cut-edge, i.e., either it is the only edge entering $j^*$ or the only edge leaving $i^*$.

Next, building upon the above optimality result of a rooted MAC-BC network $\mathcal{N}_{T_{MB}}(\vec{\beta}_{T_{MB}}, i^* \rightarrow j^*)$, we wish to generalize it to a larger class of networks where we keep the bandwidth of the cut-edge and may increase the bandwidth for other non-zero links so that $\overline{R}$ remains $1$ and $\underline{R} \geq 1$ as increasing bandwidth cannot hurt. This leads us to the following definition of $1$-MAC-BC networks.



\begin{definition}[1-MAC-BC Network]
Given a rooted MAC-BC network $\mathcal{N}_{T_{MB}}(\vec{\beta}_{T_{MB}}, i^* \rightarrow j^*)$ with cut-edge $i^* \rightarrow j^*$, we call a network $\mathcal{N}_{MB}(\vec{\beta}_{MB}, i^* \rightarrow j^*)$ where 
\begin{eqnarray}
&& \beta_{ij} = 0 ~\mbox{in}~ \vec{\beta}_{MB} ~\mbox{if}~ \beta_{ij} = 0 ~\mbox{in}~ \vec{\beta}_{T_{MB}} \\
&& \beta_{i^* j^*} = 1 ~\mbox{in}~ \vec{\beta}_{MB}\\
&& \vec{\beta}_{MB} \geq \vec{\beta}_{T_{MB}}
\end{eqnarray}
a $1$-MAC-BC network.
\end{definition}

An example of $1$-MAC-BC network is shown in Fig.~\ref{fig:1macbc}.




\begin{figure}[h]
\centering
\includegraphics[scale=0.7]{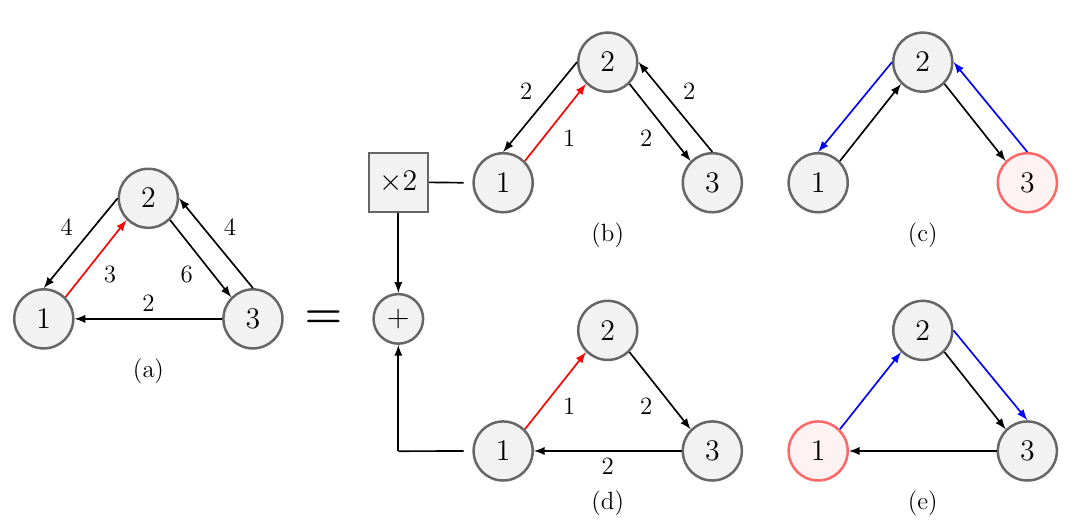}
\caption{ (b) is a $1$-MAC-BC network $\mathcal{N}_{MB}^1(\vec{\beta}_{MB}^1, 1 \rightarrow 2)$ with its associated rooted MAC-BC network in (d). (c) is a $1$-MAC-BC network $\mathcal{N}_{MB}^2(\vec{\beta}_{MB}^2, 1 \rightarrow 2)$ with its associated rooted MAC-BC network in (e). The cut-edge $1 \rightarrow 2$ is colored in red. (a) contains a network $\mathcal{N}(\vec{\beta})$ that is a linear combination of $\mathcal{N}_{MB}^1(\vec{\beta}_{MB}^1, 1 \rightarrow 2)$ and $\mathcal{N}_{MB}^2(\vec{\beta}_{MB}^2, 1 \rightarrow 2)$ considered in Theorem \ref{thm:1macbc} where $\vec{\beta} = 2 \vec{\beta}_{MB}^1 + \vec{\beta}_{MB}^2$.
}
\label{fig:1macbc}
\end{figure}




We may further generalize the optimality of $1$-MAC-BC network to their linear combinations. This result is stated in the following theorem.





\begin{theorem}\label{thm:1macbc}
Consider $U$ $1$-MAC-BC networks with $K$ nodes and the same cut-edge, $\mathcal{N}_{MB}^1(\vec{\beta}^1_{MB}, i^* \rightarrow j^*)$, $\cdots$, $\mathcal{N}_{MB}^U (\vec{\beta}^U_{MB}, i^* \rightarrow j^*)$, then the maximum All-Reduce computation rate of network $\mathcal{N}(\vec{\beta})$ where $\vec{\beta} = \lambda_1 \vec{\beta}^1_{MB} + \cdots + \lambda_U \vec{\beta}^U_{MB}$ is $\lambda_1 + \cdots + \lambda_U$.
\end{theorem}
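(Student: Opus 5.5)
We prove the theorem by sandwiching $R^*$ between the two general bounds: we show $\underline{R} \geq \lambda_1 + \cdots + \lambda_U$ via Theorem \ref{thm:lp} and $\overline{R} \leq \lambda_1 + \cdots + \lambda_U$ via Theorem \ref{thm:cut}. Combined with $\underline{R} \leq R^* \leq \overline{R}$, this forces $R^* = \lambda_1 + \cdots + \lambda_U$. Throughout, the weights $\lambda_u$ are nonnegative, as implicit in the statement since $\vec{\beta} \geq 0$.

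For the achievability side, each $1$-MAC-BC network $\mathcal{N}_{MB}^u(\vec{\beta}^u_{MB}, i^* \rightarrow j^*)$ has, by definition, an associated rooted MAC-BC network $\mathcal{N}_{T_{MB}}^{u}(\vec{\beta}^{u}_{T_{MB}}, i^* \rightarrow j^*)$ with $\vec{\beta}^u_{T_{MB}} \leq \vec{\beta}^u_{MB}$. Each such $\vec{\beta}^{u}_{T_{MB}}$ is one of the $Z$ vectors in the linear program (\ref{eq:lp}). Assign weight $\lambda_u$ to it, adding weights if two indices $u$ share the same rooted MAC-BC network, and weight $0$ to all other networks. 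Then
\[
\sum_{u=1}^U \lambda_u \vec{\beta}^u_{T_{MB}} \leq \sum_{u=1}^U \lambda_u \vec{\beta}^u_{MB} = \vec{\beta},
\]
so this weighting is feasible. Hence $\underline{R} \geq \lambda_1 + \cdots + \lambda_U$ by Theorem \ref{thm:lp}.

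For the converse side, exhibit a single cut that is tight. The cut-edge $i^* \rightarrow j^*$ is either the only edge entering $j^*$ or the only edge leaving $i^*$ in the rooted MAC-BC network, so there are two cases.

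\emph{Case 1: $i^* \rightarrow j^*$ is the only edge entering $j^*$.} Take $\mathcal{K} = [K]\setminus\{j^*\}$. In every $\vec{\beta}^u_{MB}$, the zero pattern of the associated $\vec{\beta}^u_{T_{MB}}$ is preserved and $\beta_{i^*j^*} = 1$, so $\sum_{i \neq j^*} \beta^u_{ij^*} = 1$ for each $u$. Summing with weights gives $\sum_{i \in \mathcal{K}} \beta_{ij^*} = \lambda_1 + \cdots + \lambda_U$.

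\emph{Case 2: $i^* \rightarrow j^*$ is the only edge leaving $i^*$.} Take $\mathcal{K} = \{i^*\}$. By the same reasoning, $\sum_{j \neq i^*} \beta_{i^*j} = \lambda_1 + \cdots + \lambda_U$.

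In either case, Theorem \ref{thm:cut} gives $\overline{R} \leq \lambda_1 + \cdots + \lambda_U$, completing the sandwich.

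The main obstacle is making the cut uniform across all $u$. The cut-edge property, whether $i^*\rightarrow j^*$ is the unique edge into $j^*$ or the unique edge out of $i^*$, is a property of each individual rooted MAC-BC network, and different $u$ might a priori realize it in different ways. If the $u$-th network has $i^* \rightarrow j^*$ as the unique in-edge of $j^*$ while the $u'$-th has it only as the unique out-edge of $i^*$, neither single cut is tight for the combination.

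The plan is to read the theorem's phrase ``the same cut-edge'' as including the same type: the cut-edge notation in the paper designates a fixed $i^* \rightarrow j^*$ with a fixed role, either sole in-edge of $j^*$ or sole out-edge of $i^*$, shared by all $U$ networks. Under that reading, the zero-pattern condition in the definition of a $1$-MAC-BC network transfers the uniqueness property verbatim to every $\vec{\beta}^u_{MB}$, and the cut computation above goes through. I would state this interpretation explicitly at the start of the proof.
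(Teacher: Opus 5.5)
Your proof is correct and follows the paper's argument: bandwidth sharing over the underlying rooted MAC-BC networks in the LP of Theorem~\ref{thm:lp} gives $\underline{R}\ge\lambda_1+\cdots+\lambda_U$, and the cut isolating the cut-edge in Theorem~\ref{thm:cut} gives $\overline{R}\le\lambda_1+\cdots+\lambda_U$. Your same-type reading of ``the same cut-edge'' is genuinely needed, not just convenient: the paper's assertion that $i^*\rightarrow j^*$ ``continues to be a cut-edge'' of $\mathcal{N}(\vec{\beta})$ silently uses it, and without it the statement fails. For $K=3$, take the bidirected star rooted at $1$ (where $1\rightarrow 2$ is the only edge into $2$) and the one rooted at $2$ (where $1\rightarrow 2$ is the only edge out of $1$), each with its other three links raised to bandwidth $2$; with $\lambda_1=\lambda_2=1$ they sum to a network with all $\beta_{ij}\ge 2$, so sharing the three bidirected stars with weight $1$ each already gives $R^*\ge 3>2$.
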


{\it Proof:}
Almost immediate from above discussions. 
Consider network $\mathcal{N}(\vec{\beta})$ where $\vec{\beta} = \lambda_1 \vec{\beta}^1_{MB} + \cdots + \lambda_U \vec{\beta}^U_{MB}$. $\underline{R} \geq \lambda_1 + \cdots + \lambda_U$ as computation rate $1$ can be achieved for each $1$-MAC-BC network and then we may apply bandwidth sharing over the $U$ $1$-MAC-BC networks (refer to Theorem \ref{thm:lp}). $\overline{R} \leq \lambda_1 + \cdots + \lambda_U$ as $i^* \rightarrow j^*$ continues to be a cut-edge for $\mathcal{N}(\vec{\beta})$ so that we may apply Theorem \ref{thm:cut}, i.e., (\ref{eq:cut}) to obtain $\overline{R} \leq \beta_{i^*j^*} = \lambda_1 + \cdots + \lambda_U$.
\hfill\qed

Theorem \ref{thm:1macbc} covers all networks $\mathcal{N}(\vec{\beta})$ whose topology is a bi-directed tree as a special case. See Fig.~\ref{fig:tree} for an example.

\begin{figure}[hbt!]
\centering
\includegraphics[scale=0.75]{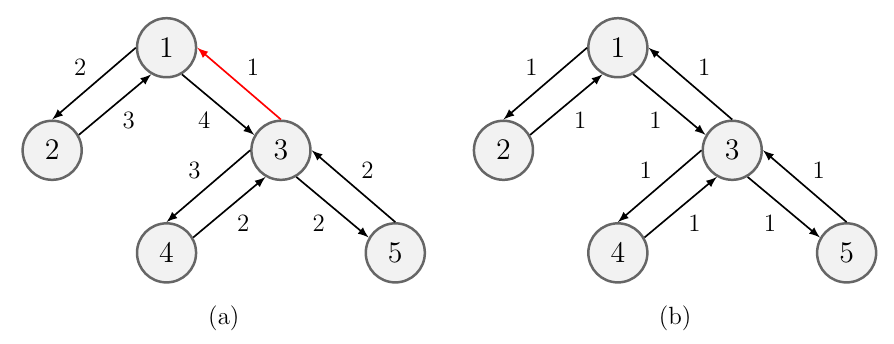}
\caption{(a). A network $\mathcal{N}(\vec{\beta})$ whose topology is a bi-directed tree and (b). its associated rooted MAC-BC network. The capacity of a bi-directed tree is equal to the minimum bandwidth of all links.}
\label{fig:tree}
\end{figure}

\subsection{Specific Topologies with Explicit $\underline{R}, \overline{R}$}\label{sec:ach}

In this section, we evaluate the upper bound $\overline{R}$ and the lower bound $\underline{R}$ explicitly for a few classes of networks $\mathcal{N}(\vec{\beta})$ and for every case, $\overline{R} \leq 2 \underline{R}$ so that the upper and lower bound of $R^*$ is within a multiplicative gap of $2$.

\subsubsection{Uniform Complete Networks}

A network $\mathcal{N}_C(\vec{\beta}_C)$ is called uniform complete if
\begin{eqnarray}
\beta_{ij} = 1, \forall i, j \in [K], i \neq j.
\end{eqnarray}

\begin{theorem}
For a uniform complete network $\mathcal{N}_C(\vec{\beta}_C)$, 
\begin{eqnarray}
K/2 \leq R^* \leq K-1.
\end{eqnarray}
\end{theorem}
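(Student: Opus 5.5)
The upper bound follows directly from Theorem \ref{thm:cut}. Take $\mathcal{K}=\{1\}$: the cut consists of the $K-1$ unit links $1\rightarrow j$, $j\neq 1$, so $\overline{R}\leq K-1$. For completeness we can also note that any cut with $|\mathcal{K}|=m$ has value $m(K-m)\geq K-1$ for $1\leq m\leq K-1$, so $\overline{R}=K-1$ exactly. Only the inequality $R^*\leq K-1$ is needed, however.

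For the lower bound we invoke Theorem \ref{thm:lp} and exhibit a feasible choice of weights with total $K/2$. Corollary \ref{cor} gives $\underline{R}\leq K(K-1)/(2(K-1))=K/2$. Any construction reaching $K/2$ must therefore use every one of the $K(K-1)$ directed unit links at full capacity, split into MAC and BC edges. The natural choice is $K$ star-shaped rooted MAC-BC networks, one per root $r\in[K]$, each with weight $\lambda_r=1/2$. For root $r$, the MAC tree is the in-star (every $i\neq r$ sends $i\rightarrow r$) and the BC tree is the out-star ($r\rightarrow j$ for all $j\neq r$).

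It remains to check the bandwidth constraint (\ref{eq:lp}) link by link. The directed link $i\rightarrow j$ appears in the star rooted at $j$, as a MAC edge, and in the star rooted at $i$, as a BC edge. It appears in no other star, since every edge of a star touches its root. Each appearance carries weight $1/2$, so the total load on $i\rightarrow j$ is $1/2+1/2=1=\beta_{ij}$, and the constraint holds with equality. The achieved rate is $\sum_r\lambda_r=K/2$. Hence $R^*\geq\underline{R}\geq K/2$, and in fact $\underline{R}=K/2$ by Corollary \ref{cor}.

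The main obstacle is finding a packing that meets Corollary \ref{cor} with equality. The counting argument, requiring every link to be fully used, is what points to the star decomposition, where each directed edge is shared evenly between its two endpoint roots. Once that construction is chosen, the verification is just the per-link count above.
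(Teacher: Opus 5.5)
Your proof is correct and matches the paper's argument: the cut $\mathcal{K}=\{1\}$ gives the upper bound, and $K$ rooted MAC-BC networks (one per root, each with weight $1/2$) give the lower bound, with Corollary~\ref{cor} showing the LP value is exactly $K/2$. Your explicit star construction and per-link count spell out what the paper leaves to its figure: each directed link $i\rightarrow j$ is used once as a MAC edge in the star rooted at $j$ and once as a BC edge in the star rooted at $i$.
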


{\it Proof:} $\overline{R} \leq K-1$ as we may set $\mathcal{K} = \{1\}$ in (\ref{eq:cut}) and apply Theorem \ref{thm:cut}, i.e., $R^* \leq \sum_{j = 2}^K \beta_{1j} = K-1$. Further, it can be proved that $\overline{R} = K-1$, i.e., $K-1$ is the tightest upper bound in (\ref{eq:cut}).

To prove $\underline{R} = K/2$, we may use the $K$ rooted MAC-BC networks shown in Fig.~\ref{fig:complete}, each with weight $\lambda_k = 1/2, k \in [K]$. Then the bandwidth constraint (\ref{eq:lp}) is satisfied and $\sum_{k\in[K]} \lambda_k = K/2$, which is equal to $\frac{\sum_{i,j} \beta_{ij}}{2(K-1)} = \frac{K(K-1)}{2(K-1)}$ so that by Corollary \ref{cor}, we have $\underline{R} = K/2$.



\begin{figure}[h]
\centering
\includegraphics[scale=0.7]{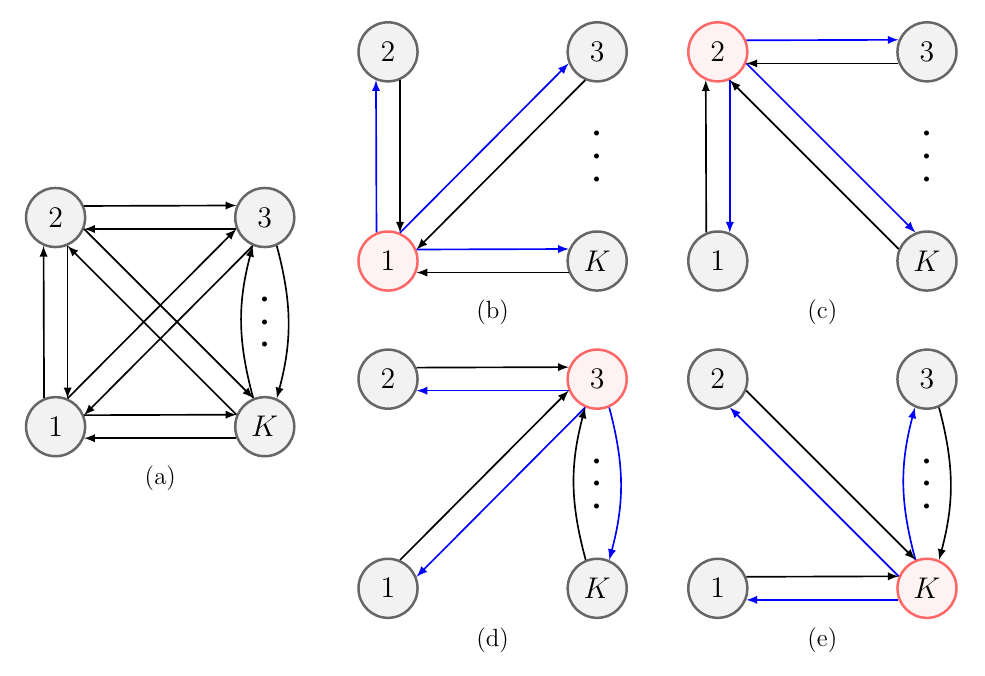}
\caption{(a). Uniform complete network and (b)-(e). $K$ rooted MAC-BC networks used.}
\label{fig:complete}
\end{figure}

\hfill\qed





\subsubsection{Cycles and Rings}

A network $\mathcal{N}_{cyc}(\vec{\beta}_{cyc})$ is called uniform cyclic if
\begin{eqnarray}
&& \beta_{12} = \beta_{23} = \cdots = \beta_{(K-1)K} = \beta_{K 1} = 1 \\
&& \mbox{other $\beta_{ij} = 0$}.
\end{eqnarray}
A network 
$\mathcal{N}_{r}(\vec{\beta}_{r})$ is called uniform ring if
\begin{eqnarray}
&& \beta_{12} = \beta_{23} = \cdots = \beta_{(K-1)K} = \beta_{K 1} = 1 \\
&& \beta_{21} = \beta_{32} = \cdots = \beta_{K(K-1)} = \beta_{1 K} = 1 \\
&& \mbox{other $\beta_{ij} = 0$}.
\end{eqnarray}

\begin{theorem}
For a uniform cyclic network $\mathcal{N}_{cyc}(\vec{\beta}_{cyc})$, 
\begin{eqnarray}
\frac{K}{2(K-1)} \leq R^* \leq 1.
\end{eqnarray}
For a uniform ring network $\mathcal{N}_{r}(\vec{\beta}_{r})$, 
\begin{eqnarray}
\frac{K}{K-1} \leq R^* \leq 2.
\end{eqnarray}
\end{theorem}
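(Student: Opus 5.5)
The plan treats the two networks in parallel. For each, Theorem~\ref{thm:cut} gives the upper bound and an explicit weighted family of rooted MAC-BC networks, fed into Theorem~\ref{thm:lp}, gives the lower bound. Corollary~\ref{cor} then shows the lower bound equals $\underline{R}$.

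\emph{Upper bounds.} Take $\mathcal{K}=\{1\}$ in (\ref{eq:cut}). In the cyclic network the only edge leaving node $1$ is $1\rightarrow 2$, so $\overline{R}\le \beta_{12}=1$. In the ring network the edges leaving node $1$ are $1\rightarrow 2$ and $1\rightarrow K$, so $\overline{R}\le \beta_{12}+\beta_{1K}=2$.

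\emph{Cyclic lower bound.} I will use $K$ rooted MAC-BC networks, one per root $r\in[K]$, each with weight $\lambda_r=\frac{1}{2(K-1)}$.
\begin{itemize}
\item The MAC tree for root $r$ is the directed path $r+1\rightarrow r+2\rightarrow\cdots\rightarrow r$ (indices mod $K$). It uses every cycle edge except $r\rightarrow r+1$.
\item The BC tree for root $r$ is the path $r\rightarrow r+1\rightarrow\cdots\rightarrow r-1$. It uses every cycle edge except $r-1\rightarrow r$.
\end{itemize}
Now count how often a fixed edge $i\rightarrow i+1$ is used. It appears in the MAC trees of all roots except $r=i$, which is $K-1$ of them. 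It appears in the BC trees of all roots except $r=i+1$, which is again $K-1$. Its total load is therefore $2(K-1)\cdot\frac{1}{2(K-1)}=1=\beta_{i(i+1)}$, so (\ref{eq:lp}) is satisfied. The achieved rate is $K\cdot\frac{1}{2(K-1)}=\frac{K}{2(K-1)}$. This equals the bound $\frac{K}{2(K-1)}$ of Corollary~\ref{cor}, so it is $\underline{R}$.

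\emph{Ring lower bound.} I will take the $K$ networks above together with their mirror images, obtained by reversing the orientation of the cycle (the map $i\mapsto -i$). Every network gets weight $\frac{1}{2(K-1)}$.
\begin{itemize}
\item The clockwise family loads each clockwise edge $i\rightarrow i+1$ exactly $1$ and uses no counterclockwise edge.
\item The counterclockwise family, by symmetry, loads each edge $i+1\rightarrow i$ exactly $1$ and uses no clockwise edge.
\end{itemize}
So every constraint in (\ref{eq:lp}) is met. The total rate is $2K\cdot\frac{1}{2(K-1)}=\frac{K}{K-1}$, which equals $\frac{2K}{2(K-1)}$ from Corollary~\ref{cor}, so again this is $\underline{R}$.

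\emph{Main obstacle.} The only real work is the edge-load count in the cyclic case, specifically checking that each edge is excluded from exactly one MAC tree and exactly one BC tree. The ring case then follows immediately by disjointness of the two orientations.
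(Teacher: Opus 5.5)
Your proposal is correct and follows the paper's proof essentially exactly: the cut $\mathcal{K}=\{1\}$ gives both upper bounds, and the cyclic lower bound uses $K$ rooted MAC-BC networks (one per root, built from directed paths along the cycle), each with weight $\frac{1}{2(K-1)}$, with Corollary~\ref{cor} giving equality; the ring is handled as two oppositely oriented cycles. Your explicit edge-load count only makes precise what the paper leaves to Fig.~\ref{fig:cycle}; note that the disjointness of the two orientations, and indeed the ring statement itself, needs $K\ge 3$ (for $K=2$ the ring collapses to $\beta_{12}=\beta_{21}=1$), an implicit assumption the paper also makes.
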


{\it Remark: The achieved rate for the uniform ring network is the same as that of the famous Ring-All-Reduce scheme \cite{patarasuk2009bandwidth, patarasuk2007bandwidth}, which applies first Reduce-Scatter and then All-Gather.}

{\it Proof:} We prove the result for the uniform cycle network and that for the ring network follows as it contains two uniform cycles (one clockwise and one counter-clockwise). 

For $\mathcal{N}_{cyc}(\vec{\beta}_{cyc})$, $\overline{R} = 1$ as we may set $\mathcal{K} = \{1\}$ in (\ref{eq:cut}). 

$\underline{R} = \frac{K}{2(K-1)}$ as we may use the $K$ rooted MAC-BC networks in Fig.~\ref{fig:cycle}, each with weight $\lambda_k = \frac{1}{2(K-1)}, k \in [K]$ and apply Corollary \ref{cor}.

\begin{figure}[hbt!]
\centering
\includegraphics[scale=0.7]{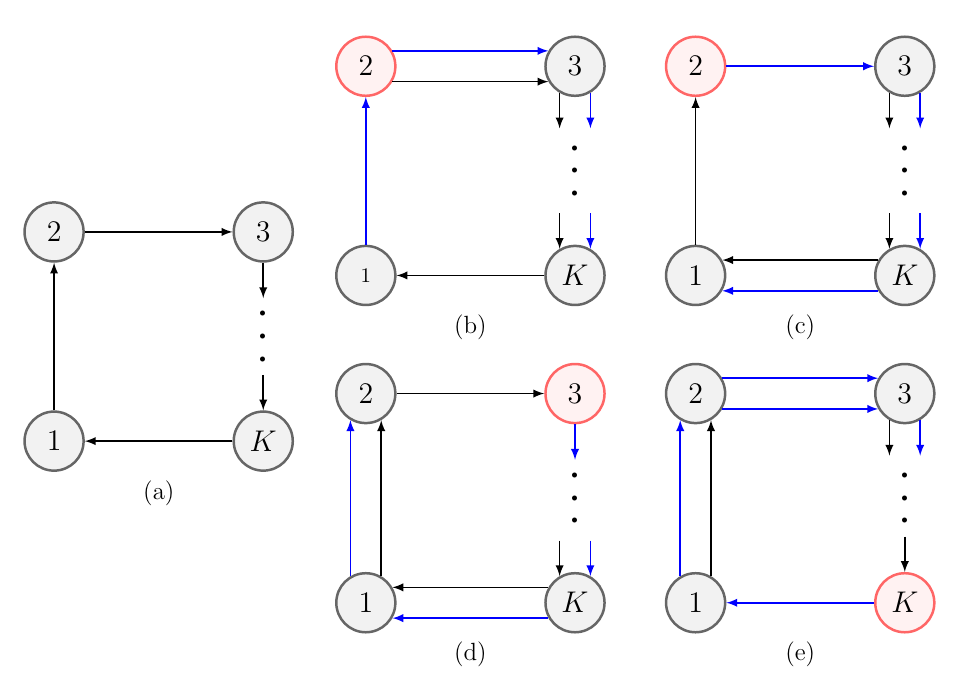}
\caption{(a). Uniform cyclic network and (b)-(e). $K$ rooted MAC-BC networks used.}
\label{fig:cycle}
\end{figure}

\hfill\qed


We consider a $K=3$ node non-uniform cyclic network $\mathcal{N}_{3cyc}(\vec{\beta}_{3cyc})$ where 
\begin{eqnarray}
\beta_{12} = a, \beta_{23} = b, \beta_{31} = c, \beta_{21} = \beta_{32} = \beta_{13} = 0.
\end{eqnarray}

\begin{theorem}
For a $3$-node cyclic network $\mathcal{N}_{3cyc}(a;b;c)$ (the $\vec{\beta}$ notation is simplified where zero elements are not shown),
\begin{eqnarray}
&& \mbox{if}~\min(a,b,c) \leq \frac{a+b+c}{4}, ~\mbox{then}~R^* = \min(a,b,c), \\
&& \mbox{else}~\min(a,b,c) > \frac{a+b+c}{4}, ~\mbox{then}~\frac{a+b+c}{4} \leq R^* \leq \min(a,b,c).
\end{eqnarray}
\end{theorem}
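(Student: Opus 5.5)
The plan is to get the upper bound $\min(a,b,c)$ from Theorem \ref{thm:cut} using singleton cuts, and the lower bounds from an explicit solution of the linear program (\ref{eq:lp}). For the upper bound, take $\mathcal{K}=\{1\}$, $\{2\}$ and $\{3\}$ in (\ref{eq:cut}). Node $1$ has only the outgoing link $1\to 2$, node $2$ only $2\to 3$, and node $3$ only $3\to 1$. These cuts give $R^*\le a$, $R^*\le b$ and $R^*\le c$, hence $R^*\le\min(a,b,c)$ in both cases.

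For the lower bound, I first list the rooted MAC-BC networks that fit inside the support $\{1\to2,\,2\to3,\,3\to1\}$. On a directed 3-cycle, each root $r$ has exactly one MAC tree and exactly one BC tree, both directed paths along the cycle.
\begin{itemize}
\item Root $3$: MAC $1\to2\to3$ and BC $3\to1\to2$, so it uses $(\beta_{12},\beta_{23},\beta_{31})=(2,1,1)$.
\item Root $1$: MAC $2\to3\to1$ and BC $1\to2\to3$, giving $(1,2,1)$.
\item Root $2$: MAC $3\to1\to2$ and BC $2\to3\to1$, giving $(1,1,2)$.
\end{itemize}
Any other rooted MAC-BC network uses a zero-bandwidth link, so it must get weight $0$. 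With weights $x_1,x_2,x_3$ on these three networks and $s=x_1+x_2+x_3$, the constraints of (\ref{eq:lp}) become
\[
s+x_1\le a,\qquad s+x_2\le b,\qquad s+x_3\le c .
\]

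Case $\min(a,b,c)\le (a+b+c)/4$: by symmetry assume $a=\min(a,b,c)$, so $3a\le b+c$. Aim for $s=a$ with $x_1=0$ and $x_3=a-x_2$. The constraints then reduce to $x_2\le b-a$ and $x_2\ge 2a-c$, together with $0\le x_2\le a$. Such an $x_2$ exists exactly when $\max(0,2a-c)\le\min(a,b-a)$. Since $a\le b$ and $a\le c$, the only nontrivial inequality is $2a-c\le b-a$, which is $3a\le b+c$, our case hypothesis. So Theorem \ref{thm:lp} gives $R^*\ge a$, which matches the upper bound and yields $R^*=\min(a,b,c)$.

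Case $\min(a,b,c)>(a+b+c)/4$: make all three constraints tight. Summing them gives $4s=a+b+c$, and then $x_1=a-s$, $x_2=b-s$, $x_3=c-s$. These are nonnegative precisely because each of $a,b,c$ exceeds $(a+b+c)/4$. Hence $R^*\ge (a+b+c)/4$. This is consistent with Corollary \ref{cor}, since $\sum\beta_{ij}/(2(K-1))=(a+b+c)/4$ for $K=3$, so $\underline{R}$ equals it exactly. Together with the cut-set bound this gives the stated sandwich.

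The main step needing care is the enumeration of the admissible rooted MAC-BC networks. I must argue that on the directed 3-cycle each root admits exactly one MAC path and one BC path, because a directed spanning tree into or out of $r$ can only use the cycle's arcs. After that, the feasibility check in the first case is elementary.
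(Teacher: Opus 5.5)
Your proof is correct and follows essentially the same route as the paper: singleton cut-set bounds give $\min(a,b,c)$, and the linear program is restricted to the same three rooted MAC-BC networks $(2;1;1)$, $(1;2;1)$, $(1;1;2)$, with all three constraints made tight in the second case. In the first case, your feasible interval $\max(0,2a-c)\le x_2\le\min(a,b-a)$ repackages the paper's explicit weights, which (after assuming $a\le b\le c$) take the endpoint $x_2=\max(0,2a-c)$ and $x_3=a-x_2$; also, the completeness of your enumeration is not needed for the lower bound, only that these three networks fit inside the support.
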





{\it Remark: When $\min(a,b,c) \leq \frac{a+b+c}{4}$, the optimal computation rate is characterized; when $\min(a,b,c) > \frac{a+b+c}{4}$, the optimal computation rate is characterized to within a multiplicative gap of $4/3$ as $\min(a,b,c) \leq \frac{3}{4} \times \frac{a+b+c}{4}$.}

{\it Proof:} From Theorem \ref{thm:cut}, we have $\overline{R} = \min(a,b,c)$ and the upper bound proof is complete. Next we proceed to the lower bound proof. We use the following three MAC-BC networks, $\mathcal{N}_1(2;1;1)$, $\mathcal{N}_2(1;2;1)$, and $\mathcal{N}_3(1;1;2)$ with weights $\lambda_1, \lambda_2, \lambda_3$, respectively.

First, consider $\min(a,b,c) \leq \frac{a+b+c}{4}$. By symmetry, we may assume $a \leq b \leq c$ without loss of generality. 
We show that $\underline{R} \geq a$. Set
\begin{eqnarray}
&& \lambda_1 = 0, \lambda_2 = 0, \lambda_3 = a ~\mbox{if}~ c \geq 2a \\
&& \lambda_1 = 0, \lambda_2 = 2a - c, \lambda_3 = c - a ~\mbox{else}~ c < 2a
\end{eqnarray}
where $\lambda_1, \lambda_2, \lambda_3$ are non-negative and the bandwidth constraint (\ref{eq:lp}) is satisfied because (the $c \geq 2a$ case is obvious and we only consider the $c < 2a$ case)
\begin{eqnarray}
&& \beta_{12}: \lambda_2 \times 1 + \lambda_3 \times 1 = a \\
&& \beta_{23}: \lambda_2 \times 2 + \lambda_3 \times 1 = 3a - c \leq b ~\mbox{as}~ a \leq \frac{a+b+c}{4}\\
&& \beta_{31}: \lambda_2 \times 1 + \lambda_3 \times 2 = c
\end{eqnarray}
So $\underline{R} \geq \lambda_1 + \lambda_2 + \lambda_3 = a$.


Second, consider $\min(a,b,c) > \frac{a+b+c}{4}$ and we show that $\underline{R} = \frac{a+b+c}{4}$. Set
\begin{eqnarray}
\lambda_1 = \frac{3a-b-c}{4}, \lambda_2 = \frac{3b-a-c}{4}, \lambda_3 = \frac{3c-a-b}{4}
\end{eqnarray}
where $\lambda_1 > 0, \lambda_2 > 0, \lambda_3 > 0$ due to $\min(a,b,c) > \frac{a+b+c}{4}$ and the bandwidth constraint (\ref{eq:lp}) is satisfied because 
\begin{eqnarray}
&& \beta_{12}: \lambda_1 \times 2 + \lambda_2 \times 1 + \lambda_3 \times 1 = a \\
&& \beta_{23}: \lambda_1 \times 1 + \lambda_2 \times 2 + \lambda_3 \times 1 = b \\
&& \beta_{31}:  \lambda_1 \times 1 + \lambda_2 \times 1 + \lambda_3 \times 2 = c
\end{eqnarray}
So $\underline{R} \geq \lambda_1 + \lambda_2 + \lambda_3 = \frac{a+b+c}{3}$ ($\geq$ can be replaced by $=$ due to Corollary \ref{cor}).
%
\hfill\qed

\subsubsection{Uniform Hypercubes}


A network $\mathcal{N}_{H}(\vec{\beta}_{H})$ is called uniform hypercube if
\begin{eqnarray}
&& K = 2^U, U \in \mathbb{N}~\mbox{and label the nodes as $1,\cdots, K-1, 0$ (i.e., $K$ is replaced by $0$)} \notag \\
&& \beta_{ij} = 1 ~\mbox{if the binary representations of $i, j$ differ in one bit}, ~i,j \in \{0,1,\cdots,K-1\} \label{eq:cube}\\
&& \mbox{other $\beta_{ij} = 0$}. \notag
\end{eqnarray}

\begin{theorem}
For a uniform hypercube network $\mathcal{N}_{H}(\vec{\beta}_{H})$ with $K = 2^U$ nodes,
\begin{eqnarray}
\frac{2^{U-1}}{2^U-1} U   \leq R^* \leq U.
\end{eqnarray}
\end{theorem}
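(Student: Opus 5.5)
The upper bound comes from Theorem \ref{thm:cut} with a singleton cut. The lower bound comes from Theorem \ref{thm:lp}, using a symmetric family of rooted MAC-BC networks whose total load is spread evenly over all directed edges. Even spreading makes the rate meet the ceiling of Corollary \ref{cor}.

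\emph{Upper bound.} I take $\mathcal{K}=\{0\}$ in (\ref{eq:cut}). Node $0$ has exactly $U$ neighbors, one per bit flip, each joined by a unit link. So $\overline{R}\le \sum_j \beta_{0j}=U$.

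\emph{Lower bound: target.} The hypercube has $KU$ directed unit edges. Corollary \ref{cor} therefore caps the linear program at
\[
\frac{KU}{2(K-1)}=\frac{2^{U-1}U}{2^U-1}.
\]
I aim to attain this value, which requires every directed edge to be loaded exactly to its capacity $1$.

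\emph{Lower bound: the family of trees.} For each root $r\in\{0,\dots,K-1\}$ and each cyclic ordering $\sigma$ of the $U$ coordinates (there are $U$ such orderings), I build the dimension-ordered binomial spanning tree rooted at $r$. In this tree, node $v\neq r$ attaches to $v\oplus e_d$, where $d$ is the first coordinate in the order $\sigma$ in which $v$ and $r$ differ. I use this tree with edges directed toward $r$ as the rooted MAC tree. I use the same undirected tree with edges directed away from $r$ as the rooted BC tree. This gives $KU$ rooted MAC-BC networks, and I assign each one the same weight $\lambda$.

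\emph{Lower bound: equal loads by symmetry.} Consider the group $G$ generated by translations $v\mapsto v\oplus a$ and cyclic shifts of the coordinates.
\begin{itemize}
\item A translation by $a$ sends the $(r,\sigma)$ network to the $(r\oplus a,\sigma)$ network.
\item A cyclic shift of coordinates sends the $(r,\sigma)$ network to the network with the correspondingly shifted root and shifted cyclic order.
\end{itemize}
So the family is invariant under $G$. Moreover, $G$ acts transitively on directed edges:
\begin{itemize}
\item Translations move any node to any other node.
\item Cyclic shifts move any dimension to any other dimension.
\item Translating by $e_d$ maps the arc $(v,v\oplus e_d)$ to the arc $(v\oplus e_d,v)$, so the two orientations are also exchanged.
\end{itemize}
Invariance plus transitivity means every directed edge carries the same total load.

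\emph{Lower bound: counting.} The total load is $KU\cdot 2(K-1)\lambda$, shared equally among $KU$ arcs. Hence each arc carries load $2(K-1)\lambda$. Setting $\lambda=1/(2(K-1))$ satisfies (\ref{eq:lp}) with equality on every edge. The resulting rate is $KU\lambda=2^{U-1}U/(2^U-1)$, and Theorem \ref{thm:lp} gives the lower bound.

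The main point needing care is the invariance claim. One must check that a coordinate shift maps the dimension-ordered tree for $(r,\sigma)$ exactly onto the tree for the shifted root and shifted order, and not just onto some other spanning tree. If this bookkeeping becomes awkward, I would fall back on a larger family: all roots and all $U!$ orderings, each with equal weight. That family is invariant under the full hyperoctahedral symmetry group, so edge-transitivity is immediate, and the same counting argument goes through unchanged.
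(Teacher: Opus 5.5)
Your proof is correct. The upper bound is the paper's singleton cut verbatim. Your lower bound uses the same kind of dimension-ordered binomial trees as the paper: you attach $v$ along the first differing coordinate in $\sigma$, the paper along the last one in $\pi$, which is immaterial. Where you genuinely differ is in how you establish feasibility.

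\textbf{Paper's route.} It takes all $K\cdot U!$ (root, permutation) pairs, each with weight $1/(2(2^U-1)(U-1)!)$. It then verifies (\ref{eq:lp}) by an explicit per-edge count, $\sum_{d=1}^{U}2\,(U-1)!\,2^{d-1}=2(2^U-1)(U-1)!$.

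\textbf{Your route.} You take only the $KU$ (root, rotation) pairs, each with weight $1/(2(K-1))$. You replace the count by invariance under an arc-transitive group followed by averaging.

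The step you flagged does go through. Let $\tau$ also act on vertices by permuting coordinates. If $d$ is the first coordinate in $\sigma$ where $v$ and $r$ differ, then $\tau(d)$ is the first coordinate in $\tau\circ\sigma$ where $\tau v$ and $\tau r$ differ. So the $(r,\sigma)$ tree maps onto the $(\tau r,\tau\circ\sigma)$ tree. The rotations of $(1,\dots,U)$ are closed under composition with the cyclic shift, so the family is invariant.

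The orientation step is actually automatic. Each of your MAC-BC networks uses one undirected tree in both directions, so it is symmetric. A direct count for your family confirms the result: each edge lies in $2\sum_{d=1}^{U}2^{U-d}=2(2^U-1)=1/\lambda$ of your trees.

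\textbf{What each approach buys.} Your argument is a shorter, less error-prone verification and uses a factor $(U-1)!$ fewer networks. It also generalizes at once. Take any uniform-bandwidth topology with $|E|$ undirected edges whose automorphism group acts transitively on edges. The orbit of a single rooted spanning tree under that group attains the Corollary \ref{cor} ceiling $|E|/(K-1)$. This recovers the complete-network and ring lower bounds as well. The paper's count is more elementary and makes explicit which networks contain each edge, at the cost of a larger family and more case bookkeeping.
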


{\it Remark:
The above (achieved) rate lower bound is higher than that of the scheme in Section 13.2.1 of \cite{sanders2019sequential} (which is better in other metrics, such as latency).
}

{\it Proof:} $\overline{R} = U$ follows from setting $\mathcal{K} = \{0\}$ in (\ref{eq:cut}) (note that each node is connected to $U$ other nodes in a hypercube) and Theorem \ref{thm:cut}.



Next, we show that $\underline{R} = \frac{2^{U-1}}{2^U-1} U$.
We will use $2^U \times U!$ rooted MAC-BC networks $\mathcal{N}_{T_{MB}}^\pi (\vec{\beta}^\pi_{T_{MB}}, r)$, where $r$ takes all values from the node set $\{0,1,\cdots, K-1\}$ and $\pi = (\pi_1, \cdots, \pi_U)$ takes all $U!$ permutations of $\{1,\cdots, U\}$. For a fixed $r$ and a fixed permutation $\pi$, the rooted MAC-BC network is specified as follows. Denote the binary representation of $r$ as $r = (r_1, \cdots, r_U)$ where $r_u \in \{0,1\}, u \in [U]$. Consider the $K$ nodes except the root in $U$ steps and denote the node set considered in the $u$-th step by $\mathcal{K}_u$ where $|\mathcal{K}_u| = 2^{u-1}$. $\mathcal{K}_u$ are set as follows.
\begin{eqnarray}
&& \mathcal{N}_{T_{MB}}^\pi (\vec{\beta}^\pi_{T_{MB}}, r): \pi = (\pi_1, \cdots, \pi_U), r = (r_1, \cdots, r_U)\\
&& 
\mathcal{K}_0: \{r\} \\
&& 
\mathcal{K}_1: \{k\}, ~\mbox{where}~ k = (r_1, \cdots, 1 - r_{\pi_1}, \cdots, r_U), ~\mbox{i.e., flip the $\pi_1$-th bit of $r$} \label{c1} \\
&& 
\mathcal{K}_2: (k_1, \cdots, 1-k_{\pi_2}, \cdots, k_U) \in \mathcal{K}_2 ~\mbox{for each}~k = (k_1, \cdots, k_U) \in \mathcal{K}_0 \cup \mathcal{K}_1 \notag\\
&&~~~~~~~~~\mbox{i.e., flip the $\pi_2$-th bit of all nodes appeared so far}  \label{c2}\\
&& ~~\vdots \label{c3}\\
&& 
\mathcal{K}_{U}: (k_1, \cdots, 1-k_{\pi_U}, \cdots, k_U) \in \mathcal{K}_{U} ~\mbox{for each}~k = (k_1, \cdots, k_U) \in \mathcal{K}_0 \cup \cdots \cup \mathcal{K}_{U-1} \label{c4}
\end{eqnarray}

Then the unit bandwidth links of $\mathcal{N}_{T_{MB}}^\pi (\vec{\beta}^\pi_{T_{MB}}, r)$ are all edges that connect each node in $\mathcal{K}_u$ to all nodes in $\mathcal{K}_0 \cup \cdots \cup \mathcal{K}_{u-1}$ whenever allowed in hypercube.
\begin{eqnarray}
&& \mathcal{N}_{T_{MB}}^\pi (\vec{\beta}^\pi_{T_{MB}}, r): \beta_{ij} = \beta_{ji} = 1 ~\mbox{if}~ i \in \mathcal{K}_u, j \in  \mathcal{K}_{u^*}, u, u^* \in \{1,\cdots, U\}, u^* < u, ~\mbox{and $i,j$ differ in one bit} \notag \\
&& ~~~~~~~~~~~~~~~~~~~~~~~~~~~~~~ \mbox{other}~\beta_{ij} = 0. \label{c5}
\end{eqnarray}

An example of $\mathcal{N}_{T_{MB}}^\pi (\vec{\beta}^\pi_{T_{MB}}, r)$ is shown in Fig.~\ref{fig:cube}.

\begin{figure}[hbt!]
\centering
\includegraphics[scale=0.95]{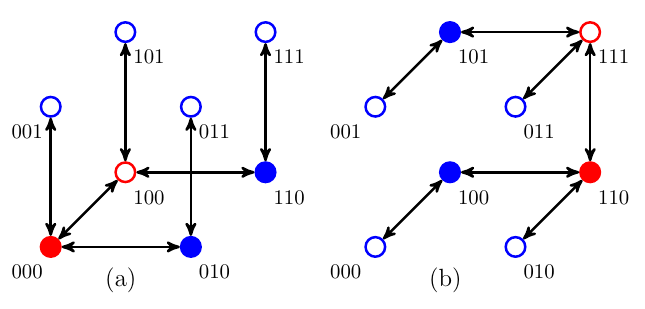}
\caption{The rooted MAC-BC network $\mathcal{N}_{T_{MB}}^\pi (\vec{\beta}^\pi_{T_{MB}}, r)$ when (a). $r = (0,0,0)$, $\pi = (1,2,3)$ and (b). $r=(1,1,0)$, $\pi = (3,2,1)$. The red filled circle denotes the node in $\mathcal{K}_0$ (the root), the red hollow circle denotes the node in $\mathcal{K}_1$, the blue filled circles denote the nodes in $\mathcal{K}_2$, and the blue hollow circles denote the nodes in $\mathcal{K}_3$.}
\label{fig:cube}
\end{figure}


Each one of the $2^U \times U! \triangleq Z$ rooted MAC-BC networks will be associated with weight $\lambda_z = \frac{1}{2(2^U-1)(U-1)!}, z \in [Z]$
so that $\overline{R} = \sum_{z\in[Z]} \lambda_z = \frac{2^{U-1} \times U}{2^{U}-1} = \frac{K \times U}{2(K-1)} = \frac{\sum \beta_{ij}}{2(K-1)}$. We are left to prove the bandwidth constraint (\ref{eq:lp}) is satisfied. To this end, we need to show that each edge $i \rightarrow j$ of the hypercube $\mathcal{N}_H$ appears in the $Z$ rooted MAC-BC networks $\mathcal{N}_{T_{MB}}^\pi (\vec{\beta}^\pi_{T_{MB}}, r)$ for $2(2^U-1)(U-1)! = 1/\lambda_z$ times in total.

Before considering the general case, we first give an example of $2^3$-node hypercube and edge $(0,0,0) \rightarrow (1,1,1)$ to illustrate the idea in a simpler setting.





\begin{example}
Consider the rooted MAC-BC networks $\mathcal{N}_{T_{MB}}^\pi (\vec{\beta}^\pi_{T_{MB}}, r), \forall r \in \{(0,0,0), \cdots, (1,1,1)\}, \forall \pi = (\pi_1, \pi_2, \pi_3)$ that is a permutation of $\{1,2,3\}$ and we check which $\mathcal{N}_{T_{MB}}^\pi$ contain edge $(0,0,0) \rightarrow (0,0,1)$ as a unit bandwidth link.

Note that in edge $(0,0,0) \rightarrow (1,1,1)$, the two nodes only differ in the $3$-rd bit representation. We classify all $\mathcal{N}_{T_{MB}}^\pi$ by which $\pi_u, u \in \{1,2,3\}$ is equal to $3$ and have the following $3$ cases.

\begin{enumerate}
\item $\pi_1 = 3$: According to which links have unit bandwidth in $\mathcal{N}_{T_{MB}}^\pi (\vec{\beta}^\pi_{T_{MB}}, r)$ (refer to (\ref{c1}) - (\ref{c5})), in order for edge $(0,0,0) \rightarrow (0,0,1)$ to appear in $\mathcal{N}_{T_{MB}}^\pi (\vec{\beta}^\pi_{T_{MB}}, r)$, $(0,0,0)$ and $(0,0,1)$ must be in $\mathcal{K}_0$ and $\mathcal{K}_1$ (as only the $3$-rd bit changes and $\pi_1 = 3$). Therefore when $\pi_1 = 3$, edge $(0,0,0) \rightarrow (0,0,1)$ appears only in $\mathcal{N}_{T_{MB}}^\pi (\vec{\beta}^\pi_{T_{MB}}, r)$ where $\pi_1 = 3$ and $r = \{0,0,0\}$ or $r = \{0,0,1\}$, i.e., $2! \times 2 = 4$ times where $2!$ corresponds to the number of choices of $\pi_2, \pi_3$ and $2$ corresponds to the number of choices of $r$. See Fig.~\ref{fig:edge}.(a) for an illustration. 



\item $\pi_2 = 3$: Here for edge $(0,0,0) \rightarrow (0,0,1)$ to appear in $\mathcal{N}_{T_{MB}}^\pi (\vec{\beta}^\pi_{T_{MB}}, r)$, according to (\ref{c1}) - (\ref{c5}), one of $(0,0,0)$ and $(0,0,1)$ must be from $\mathcal{K}_0 \cup \mathcal{K}_1$ and the other one must be from $\mathcal{K}_2$ (edges where the $3$-rd bit changes must be connected to $\mathcal{K}_2$ nodes). First, suppose $(0,0,0) \in \mathcal{K}_2$. $\pi_1, \pi_3$ can take $2!$ choices. For either case, say $\pi_1 = 1, \pi_3 = 2$, the root has $2$ choices, i.e., $r = (1,0,1)$ or $r = (0,0,1)$ where the $1$-st bit can be arbitrary. So overall the edge appears $2! \times 2 = 4$ times. 
Second, suppose $(0,0,1) \in \mathcal{K}_2$. This case can be considered similarly and edge $(0,0,0) \rightarrow (0,0,1)$ also appears $4$ times. Combining both we have $(0,0,0) \rightarrow (0,0,1)$ appears $8$ times. See Fig.~\ref{fig:edge}.(b) for an illustration. 


\item $\pi_3 = 3$: Here for edge $(0,0,0) \rightarrow (0,0,1)$ to appear in $\mathcal{N}_{T_{MB}}^\pi (\vec{\beta}^\pi_{T_{MB}}, r)$, one of $(0,0,0)$ and $(0,0,1)$ must be in $\mathcal{K}_3$ (say $(0,0,0)$) and the other one must be from $\mathcal{K}_0 \cup \mathcal{K}_1 \cup \mathcal{K}_2$. $\pi_1, \pi_2$ can take $2!$ choices and for every choice, the root $r$ can take any value from the set $\{(0,0,1), (0,1,1), (1,0,1), (1,1,1)\}$ where the first $2$ bits can be arbitrary. So overall edge $(0,0,0) \rightarrow (0,0,1)$ appears $2 \times 2! \times 2^2 = 16$ times. See Fig.~\ref{fig:edge}.(c) for an illustration. 

\end{enumerate}

To sum up, in total edge $(0,0,0) \rightarrow (0,0,1)$ appears $4+8+16 = 28$ times, as desired (note that $28 = 2 \times (2^3-1) \times 2!$). 


\begin{figure}[hbt!]
\centering
\includegraphics[scale=0.7]{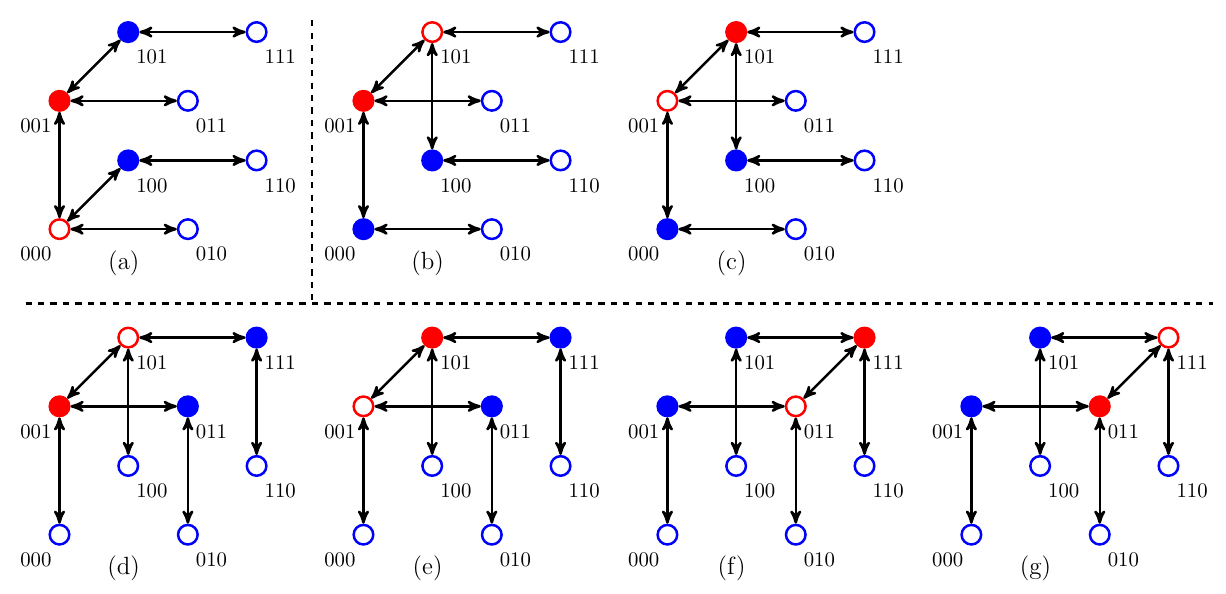}
\caption{$\mathcal{N}_{T_{MB}}^\pi (\vec{\beta}^\pi_{T_{MB}}, r)$ where edge $(0,0,0) \rightarrow (0,0,1)$ appears. (a). An example of $\pi_1 = 3$: $\pi=(3,1,2)$, $\mathcal{K}_1 = \{(0,0,0)\}$, and $\mathcal{K}_0 = \{(0,0,1)\}$. (b) - (c). Examples of $\pi_2 = 3$: $\pi=(1,3,2)$, $(0,0,0) \in \mathcal{K}_2$, then $\mathcal{K}_1 = \{(r_1,0,1)\}$, and $\mathcal{K}_0 = \{(1-r_1,0,1)\}, r_1 \in \{0,1\}$. (d) - (g). Examples of $\pi_3 = 3$: $\pi=(1,2,3)$, $(0,0,0) \in \mathcal{K}_3$, then $(r_1,r_2,1) \in \mathcal{K}_2$, $\mathcal{K}_1 = \{(r_1,1-r_2,1)\}$, and $\mathcal{K}_0 = \{(1-r_1,1-r_2,1)\}, r_1, r_2 \in \{0,1\}$.}
\label{fig:edge}
\end{figure}
\end{example}

We now proceed to the general case, i.e., any edge (unit bandwidth link) $(k_1, \cdots, k_u, \cdots, k_U) \rightarrow (k_1, \cdots, 1-k_u, \cdots, k_U)$ of $\mathcal{N}_H$ appears for a total number of $2(2^{U}-1)(U-1)!$ times in all $Z$ rooted MAC-BC networks $\mathcal{N}_{T_{MB}}^\pi (\vec{\beta}^\pi_{T_{MB}}, r)$. 

Similar to the example above, in edge $(k_1, \cdots, k_u, \cdots, k_U) \rightarrow (k_1, \cdots, 1-k_u, \cdots, k_U)$, the $u$-th bit of the two nodes differs and consider the cases where $\pi_1 = u$, or $\pi_2 = u$, $\cdots$, or $\pi_U = u$.

When $\pi_d = u, d \in [U]$, one of $(k_1, \cdots, k_u, \cdots, k_U)$ and $(k_1, \cdots, 1-k_u, \cdots, k_U)$ is in $\mathcal{K}_d$ and the other one is in $\mathcal{K}_{d^*}$ where $d^* < d$. For either case (say $(k_1, \cdots, k_u, \cdots, k_U) \in \mathcal{K}_d$), $(\pi_1, \cdots, \pi_{d-1}, \pi_{d+1}, \cdots, \pi_U)$ may take $(U-1)!$ choices and for any choice, $r$ may take $2^{d-1}$ choices, i.e., $r = (r_1, \cdots, r_{d-1}, k_{d}, \cdots, k_U)$ where $r_1, \cdots, r_{d-1}$ can be arbitrary. Counting everything, we have
\begin{eqnarray}
\sum_{d=1}^U 2 \times (U-1)! \times 2^{d-1} = 2 \times (U-1)! \times (2^0 + 2^1 + \cdots + 2^{U-1}) = 2 (2^U -1) (U-1)!
\end{eqnarray}
and the proof is complete.
\hfill\qed

\section{Discussion}\label{sec:dis}

We have studied the All-Reduce problem from a computation rate perspective. Two simple general bounds are provided - the cut-set upper bound and the Reduce-Broadcast-LP lower bound. We cannot find a single network instance where we may improve either the upper or lower bound. 

For the upper bound side, note that the network considered in this work contains cycles while most prior work in network coding consider acyclic networks so that we do not have many tools for converse and cyclic networks are tricky to deal with \cite{harvey2007chicken, erez2010efficient}. In addition to cycles in the network, note that All-Reduce is a computation problem; further, a compound computation problem where everyone (so more than $1$ node/destination) demands the sum. This compound computation aspect is also under-explored in network information theory \cite{NIT} and might add some complexity to the problem (so calls for more attention).

For the lower bound side, the Reduce-Broadcast scheme is not the only possible scheme as we may alternatively first Reduce-Scatter and then All-Gather as in the widely applied Ring-All-Reduce scheme \cite{patarasuk2009bandwidth, patarasuk2007bandwidth}. 
Separate coding is used in this work and throughout the literature and it is interesting to see if joint coding over different sum instances strictly helps (note that Ring-All-Reduce, whose achieved rate is the same as that of our Reduce-Broadcast-LP lower bound has been proved optimal for a class of separate coding schemes \cite{patarasuk2009bandwidth, patarasuk2007bandwidth}, referred to as bandwidth optimal in computer science literature, but this is weaker than information theoretic optimality as considered in this work).

Probably the smallest open problem is the $3$-node uniform complete network (coinciding with the $3$-node uniform ring network) where $3/2 \leq R^* \leq 2$. It is not clear if either bound may be tight. 
In addition to tighten the two bounds, another interesting open question is to see if the multiplicative gap of $2$ of the two bounds obtained for all the networks considered in this work can be generalized to all networks, i.e., if $\overline{R} \leq 2\underline{R}$ for all $\vec{\beta}$ and if not, finding tighter bounds with possibly even smaller gaps in general. Finally, All-Reduce becomes intimately related to secure aggregation \cite{zhao2022information, sun2023secure} after including security constraints and it is of interest to characterize the secure computation rate of All-Reduce.  

\bibliography{Thesis}
\end{document}